\documentclass{article}
\usepackage{graphicx,latexsym,natbib,amssymb,amsmath, complexity, amsthm}
\usepackage{subfig, tikz, float}
\usetikzlibrary{calc}

\pdfpagewidth=8.5truein
\pdfpageheight=11truein

\newcommand{\ignore}[1]{}%

\newtheorem{thm}{Theorem}

\newtheorem{lemma}{Lemma}

\theoremstyle{definition}
\newtheorem{defn}{Definition}
\theoremstyle{definition}
\newtheorem{rem}{Remark}

\newenvironment{theorem_again}[1]{\noindent{\bf Theorem #1.~}}{}
\newenvironment{lemma_again}[1]{\noindent{\bf Lemma #1.~}}{}


\sloppy

\begin{document}
\clubpenalty=10000 
\widowpenalty    = 10000 
\title{The Complexity of Fairness
through Equilibrium
}

\author{ 	{Abraham Othman}\\
           	{aothman@cs.cmu.edu}
	\and
		{Christos Papadimitriou}\\
        	{EECS}\\
        	{UC Berkeley}\\
        	{christos@cs.berkeley.edu}\\
	\and
		{Aviad Rubinstein}\\
        	{EECS}\\
        	{UC Berkeley}\\
        	{aviad@eecs.berkeley.edu}\\
}

\maketitle

\begin{abstract}
Competitive equilibrium with equal incomes (CEEI) is a well-known fair allocation mechanism \citep{Foley67:Resource,Varian74:Equity,Thomson85:Theories}; however, for indivisible resources a CEEI may not exist.   It was shown in \citet{ACEEI_Bud11} that in the case of indivisible resources there is always an allocation, called A-CEEI,  that is approximately fair, approximately truthful, and approximately efficient, for some favorable approximation parameters.  This approximation is used in practice to assign business school students to classes. 
%
%
In this paper we show that finding the A-CEEI allocation guaranteed to exist by Budish's theorem is $\PPAD$-complete.  We further show that finding an approximate equilibrium with better approximation  guarantees is even harder: $\NP$-complete.
\end{abstract}

\thispagestyle{empty}
\clearpage
\setcounter{page}{1}







\section{Introduction}
University classes have limited capacity, and some are more popular than others. This creates an interesting allocation problem.  Imagine that each student has ordered all possible bundles of courses from most desirable to least desirable, and the capacities of classes are known.  What is the best way to allocate class seats to students?  There are several desiderata for a course allocation mechanism: 
\begin{description}
\item[Fairness]  In what sense is the mechanism ``fair''?
\item[Efficiency] Are all possible seats in courses allocated? 
\item[Feasibility]  Are any classes oversubscribed?
\item[Truthfulness] Are students motivated to honestly report their preferences to the mechanism?
\item[Computational efficiency]  Can the allocation be computed from the data in polynomial time?
\end{description}

Competitive Equilibrium from Equal Incomes (CEEI) \citep{Foley67:Resource,Varian74:Equity,Thomson85:Theories} is a venerable mechanism with many attractive properties: In CEEI all agents are allocated the same amount of ``funny money", next they declare their preferences, and then a price equilibrium is found that clears the market.  
The market clearing guarantees efficiency and feasibility. The mechanism has a strong, albeit technical, ex post fairness guarantee that emerges from the notion that agents who miss out on a valuable, competitive item will have extra funny money to spend on other items at equilibrium. Truthfulness is problematic --- as usual with market mechanisms --- even though the problem is mitigated by the large number of agents.   However, CEEI works when the resources to be allocated are divisible and the utilities relatively benign.  It is easy to construct examples in which a CEEI does not exist when preferences are complex or the resources being allocated are not divisible. Indeed, both issues arise in practice in a variety of allocation problems, including shifts to workers, landing slots to airplanes, and our favorite, courses to students. ~\citep{Varian74:Equity,ACEEI_Bud11}. 

It was recently shown in \citet{ACEEI_Bud11} that an approximation to a CEEI solution, called A-CEEI, exists even when the resources are indivisible and agent preferences are arbitrarily complex, as required by the course allocation problems one sees in practice.   The approximate solution guaranteed to exist is approximately fair (in that the agents are given almost the same budget), and approximately efficient and feasible (in that all classes are filled close to capacity, with the possible exception of very unpopular classes).  This result seems to be wonderful news for the class allocation problem.  However, there is a catch:  Budish's proof is non-constructive as it relies on Kakutani's fixed-point theorem.

A heuristic algorithm for solving A-CEEI was introduced in~\citet{Othman10:Finding}. The algorithm is a modified search analogue to the traditional  t\^atonnement process, where the prices of courses that are oversubscribed are increased, and the prices of courses that are undersubscribed are decreased. This heuristic algorithm is currently used by the Wharton School (University of Pennsylvania) to assign their MBA students to courses. It has been documented that the heuristic algorithm often produces much tighter approximations than the theoretical bound; yet, on some instances it fails to find even the guaranteed approximation~\citep[Section 9]{ACEEI_Bud11}.

Thus  A-CEEI is a problem where practical interest motivates theoretical inquiry.  We have a theorem that guarantees the existence of an approximate equilibrium --- the issue is finding it.  Can the heuristic of ~\citet{Othman10:Finding} be replaced by a fast and rigorous algorithm for finding an approximate CEEI?  Or are there complexity obstacles to approximating CEEI?

In this paper, we show that finding the guaranteed approximation to CEEI is an intractable problem: \\

\begin{theorem_again}{\ref{thm:ppad-hard}, informal statement}
The problem of finding an A-CEEI as guaranteed by \citet{ACEEI_Bud11} is $\PPAD$-complete.
\end{theorem_again}\\

We also show an essentially optimal $\NP$-hardness result for determining whether a better approximation exists. \\

\begin{theorem_again}{\ref{thm:np}, informal statement}
It is $\NP$-hard to distinguish between an instance where an exact CEEI exists, and one in which there is no A-CEEI tighter than guaranteed in \citet{ACEEI_Bud11}.
\end{theorem_again}\\


\section{The Course Allocation Problem}

Even though the A-CEEI and the existence theorem in \citep{ACEEI_Bud11} are applicable to a broad range of allocation problems, we shall describe our results in the language of the course allocation problem.

We are given a set of $M$ courses with integer capacities (the supply) $(q_{j})_{j=1}^{M}$, and a set of $N$ students, where each student $i$ has a set $\Psi_{i}\subseteq 2^{M}$
of permissible course bundles, with each bundle containing at most $k \leq M$ courses.  The set $\Psi_{i}$
encodes both scheduling constraints (e.g., courses that meet
at the same time) and any constraints specific to student $i$ (e.g. prerequisites).


Each student $i$ has a strict ordering over her permissible schedules, denoted by $\preccurlyeq_{i}$.  We allow arbitrarily complex preferences --- in particular,
students may regard courses as substitutes or complements.  More formally:

\begin{defn} \textbf{Course Allocation Problem}
The input to a course allocation problem consists of:
\begin{itemize}
\item For each student $i$ a set of course bundles $\left(\Psi_{i}\right)_{i=1}^{N}$.
\item The students' reported preferences, $\left(\preccurlyeq_{i}\right)_{i=1}^{N}$,
\item The course capacities, $\left( q_j \right)_{j=1}^{M}$, and
\end{itemize}

The output to a course allocation problem consists of:
\begin{itemize}
\item Prices for each course $(p_{j}^{*})_{j=1}^{M}$,
\item Allocations for each student$(x_{i}^{*})_{i=1}^{N}$, and
\item Budgets for each student $\left(b_{i}^{*}\right)_{i=1}^{N}$.
\end{itemize}

\end{defn}

How is an allocation evaluated? The \emph{clearing error} of a solution to the allocation problem, is the $\mathcal{L}_2$ norm of the length-$M$ vector of seats oversubscribed in any course, or undersubscribed seats in courses with positive price.

\begin{defn}
The \emph{clearing error} $\alpha$ of an allocation is 

\[ \alpha \equiv \sqrt{\sum_j z_j^2} \]
 Where $z_j$ is given by
\[
z_j  =  \left\{
\begin{array}{ll}
\sum_{i}x_{ij}^{*}-q_{j} & \mbox{ if $p^*_j > 0$}; \\
  \max\left[\left(\sum_{i}x_{ij}^{*}-q_{j}\right),0\right] & \mbox{ if $p^*_j = 0$}.
  \end{array}\right.
\]
\end{defn}

We can now define the notion of {\em approximate} CEEI. 
The quality of approximation is characterized by two parameters: 
$\alpha$, the clearing error (how far is our solution from a true competitive equilibrium?) and
$\beta$, the bound on the difference in budgets (how far from equal are the budgets?). Informally, $\alpha$ can be thought of as the approximation loss on efficiency, and $\beta$ can be thought of as the approximation loss on fairness.

\begin{defn}
An allocation is a \emph{$(\alpha, \beta)$-CEEI} if:
\begin{enumerate}
\item Each student is allocated their most preferred affordable bundle. Formally
\[ \forall i:x_{i}^{*}=\arg\max_{\preccurlyeq_i }
	\left[ x_i\in\Psi_{i} :\sum_{j}x_{ij}p_{j}^{*}\leq b_{i}^{*}\right] \]
\item Total clearing error is at most $\alpha$.
\item Every budget $b^*_i \in [1,1+\beta]$.

\end{enumerate}
\end{defn}

In \citet{ACEEI_Bud11} it is proved that an $(\alpha,\beta)$-approximate CEEI always exists, for some quite favorable (and as we shall see, essentially optimal) values of $\alpha$ and $\beta$:

\begin{thm}
\label{main_budish_theorem}
\citet{ACEEI_Bud11}
For any input preferences, there exists a $(\alpha, \beta)$-CEEI with $\alpha=\sqrt{kM/2}$ and any $\beta > 0$.
\end{thm}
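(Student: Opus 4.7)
The plan is to follow Budish's original argument, which applies Kakutani's fixed-point theorem to a price-adjustment correspondence after perturbing budgets to break ties in individual demand.

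First I would fix any $\beta > 0$ and assign each student $i$ a \emph{generic} budget $b_i^* \in [1, 1+\beta]$ (for instance via a small deterministic perturbation of $1$). Generic budgets guarantee that for all but a measure-zero set of price vectors $p$, every student has a unique $\preccurlyeq_i$-best affordable bundle in $\Psi_i$, so the individual demand correspondence is essentially single-valued and only jumps on a thin set.

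Next I would carry out the fixed-point argument on the bounded price box $P = [0,\bar p]^M$, where $\bar p > 1+\beta$ is chosen large enough that no student can afford a course priced at $\bar p$ even alone. For each $p \in P$, let $D_i(p) \subseteq \Psi_i$ denote the set of $\preccurlyeq_i$-maximal affordable bundles for student $i$. Define the price-adjustment correspondence
\[ F(p) \;=\; \mathrm{conv}\Bigl\{\, \Pi_P\!\bigl(p + z(p,x)\bigr) \;:\; x_i \in D_i(p) \text{ for every } i \,\Bigr\}, \]
where $z(p,x)$ is the excess-demand vector (with undersubscription on zero-priced courses replaced by $0$) and $\Pi_P$ is the projection onto $P$. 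The routine verification is that $F$ is nonempty, convex- and compact-valued, and upper hemi-continuous. Upper hemi-continuity of $F$ follows from that of each $D_i$: a bundle that is best in the limit price must already appear in the limit of the maximal sets, because affordability is a closed condition once ties are accounted for. Kakutani's theorem then supplies a fixed point $p^* \in F(p^*)$.

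The main obstacle, and the step that actually produces the clearing-error bound, is rounding the fractional fixed-point allocation to an integer one. By definition of $F(p^*)$ there exist extreme demand profiles $x^{(1)},\dots,x^{(s)}$ and convex weights $\lambda_l$ such that the mixed excess demand $\bar z = \sum_l \lambda_l z(p^*,x^{(l)})$ is nonpositive on coordinates $j$ with $p_j^*=0$ and exactly zero on coordinates with $0 < p_j^* < \bar p$; in other words $\bar z$ is a fractionally market-clearing vector. Let $S_i \subseteq \{0,1\}^M$ be the set of $0/1$ demand vectors available to student $i$ at $p^*$; since each bundle in $\Psi_i$ has at most $k$ courses, the circumradius of $\mathrm{conv}(S_i)$ is at most $\sqrt{k/2}$. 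Because $\bar z + q \in \sum_i \mathrm{conv}(S_i) \subseteq \mathbb{R}^M$, the Shapley--Folkman--Starr theorem produces an integer selection $x_i^* \in S_i$ with
\[ \Bigl\| \sum_i x_i^* \;-\; (\bar z + q) \Bigr\|^2 \;\le\; M \cdot \tfrac{k}{2} \;=\; \tfrac{kM}{2}. \]
Since $\bar z$ already satisfies the sign conditions in the definition of clearing error, the excess demand of $x^*$ has $\mathcal{L}_2$ norm at most $\sqrt{kM/2}$, and $(p^*, x^*, b^*)$ is the desired $(\alpha,\beta)$-CEEI. I expect the Shapley--Folkman rounding, together with checking that the sign pattern of $\bar z$ is preserved under the rounding in a way compatible with the clearing-error definition, to be the delicate step; everything else is standard fixed-point bookkeeping.
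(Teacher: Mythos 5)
This theorem is quoted from \citet{ACEEI_Bud11}; the paper does not reprove it, though Appendix \ref{sec:in_ppad} retraces its proof to establish $\PPAD$ membership. Your outline follows the same architecture (perturb to general position, Kakutani on a convexified t\^atonnement map, round the fractional fixed-point demand), with one real substitution: Shapley--Folkman--Starr in place of Budish's Lemmas 3--4, which first solve a linear program for a fractional distribution $a_i^f$ over the candidate demands $d_i^f$ and then round it (randomly in Budish, by conditional expectations in the appendix). That substitution is legitimate in spirit, but your constant is not justified: the circumradius of a set of $0/1$ vectors of support at most $k$ can approach $\sqrt{k}$ (take $S_i=\{e_1,\dots,e_m\}$ with $k=1$), not $\sqrt{k/2}$, so the naive Starr bound gives $\sqrt{kM}$. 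Budish's $\sqrt{kM/2}$ comes from a variance computation that also uses the fact that the total number of hyperplanes through the fixed point is at most $M$.

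There are two more substantive gaps. First, your $F$ is built from the demand sets $D_i(p)$ \emph{at} $p$, and the u.h.c.\ claim is backwards: if $p^w\to p$ from the side of a budget hyperplane on which a preferred bundle $x$ is unaffordable, then the demands at $p^w$ equal some strictly worse $x'$, and $x'\notin D_i(p)$ because $x$ is exactly affordable at $p$. So the graph of your $F$ is not closed and Kakutani does not apply; this is exactly why Budish defines $F(p)$ as the convex hull of \emph{limits} $\lim f(p^w)$ over sequences $p^w\to p$, $p^w\neq p$, which contains these one-sided demands. Second, and most importantly, you fix one generic budget $b_i^*$ per student at the outset and never revisit it. After rounding, the selected $x_i^*$ is one of several candidates $d_i^f$ --- student $i$'s demand when some subset of her exactly-affordable bundles is treated as unaffordable --- and in general it is \emph{not} her $\preccurlyeq_i$-best affordable bundle at $(p^*,b_i^*)$, so condition 1 of the $(\alpha,\beta)$-CEEI definition fails. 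The existence proof needs per-(student, bundle) perturbations: taxes $\tau_{i,x}$ monotone in $\succ_i$, with the final budgets set to $b_i^*=b_i+\tau_{i,x_i^*}$ only \emph{after} the allocation is chosen. This is the step that actually consumes the $\beta$ of budget slack (and also what puts the several hyperplanes of a single student in general position, which a per-student perturbation cannot do). Without it the theorem as stated is not established.
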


Recall that $k$ is the maximum bundle size. 
$\alpha=\sqrt{kM/2}$ means that, for large number of students and course capacities, the market-clearing error converges to zero quite fast as a fraction of the endowment.   
It is also shown in \citet{ACEEI_Bud11} that the mechanism 
which allocates courses according to such an A-CEEI satisfies attractive criteria of 
approximate fairness, approximate truthfulness, and approximate Pareto efficiency.
The reader may consult \citet{ACEEI_Bud11}
for the precise definitions of the economic properties of the A-CEEI mechanism.

\subsection*{Total Functions and PPAD}
Theorem \ref{main_budish_theorem}  is an example of a non-constructive existential result; such theorems are common in mathematics, and are quite often related to economics (recall Nash's theorem, Arrow-Debreu theorem, etc.).  It is often important to determine whether there is a polynomial algorithm for finding the solution guaranteed by such a theorem; computational problems of this nature are called {\em total}, because they correspond to total functions from inputs to solutions.  

In exploring the difficulty of total problems, applying the methodology of $\NP$-completeness is problematic.  
The intuitive reason is that, for example, a reduction from 3SAT relies heavily on the fact that the starting 3SAT instance {\em may be unsatisfiable}.   
Therefore 3SAT cannot be reduced in any meaningful way to a total problem such as  A-CEEI (see Chapter 2 of \cite{AGT} for a discussion of this point).  
$\NP$-completeness does not seem to be an option.  But there is an alternative:  Total problems can often be proved complete for  certain complexity classes between $\P$ and $\NP$.  
For example, during the past decade several game-theoretic problems have been proved complete for the complexity class $\PPAD$, containing difficult problems related to fixed point theorems such as Brouwer's, Nash's, competitive equilibria, and so on 
(\cite{PPAD_Pap94, Nash_winlose, Leontief, leontief_HT07, 2-player_nash_CDT09, Chen_Deng_AD_Fisher, NASH-is-PPAD-hard_DGP09, Reduction_with_same_gadgets_KPRST09, 2dtucker, social_arrow_debreu_CT11, fisher_VY11, arrow_debreu_non-monotone_CPY13}).  

\ignore{
There are several interesting and subtle ways of defining $\PPAD$, but for our purposes it is most convenient to define it as the class of all total problems that are reducible to the following problem, called {\sc 3D-Brouwer,} capturing the complexity of Brouwer's fixed-point theorem.

 \medskip\noindent
 {\sc 3D Brouwer:} Given a function $f$ from the tree-dimensional cube $[0,1]^3$ to itself, and an integer $n$, find an {\em approximate Brouwer fixed point} of $f$.  
 
\medskip 
 It remains to explain how $f$ is described in an instance of the problem {\sc 3D-Brouwer}, and what an approximate fixed point is.  We assume that $f$ is given as a Boolean circuit $C_f$ with $3n$ inputs and two outputs.  The $3n$ Boolean inputs describe a point in the three dimensional cube with coordinates multiples of $2^{-n}$, while the two outputs encode a small displacement vector $D\in \{(\delta,0,0),(0,\delta,0),(0,0,\delta),(-\delta,\delta,-\delta)\}$, such that $f(x)=x+D$.  The function $f$, whose approximate Brouwer fixed point we seek, is then meant to be an interpolation of the function described above.   Finally, by an ``approximate Brouwer fixed point'' we mean something very weak, namely a point $x$ with coordinates multiples of of $2^{\-n}$ such that among there are four different other such vertices at a distance of $2^{-n}$ from it that are displaced by $f$ in all four possible values of $D$.  This problem is shown in \citet{NASH-is-PPAD-hard_DGP09} to be $\PPAD$-complete.
}

There are several interesting and subtle ways of defining $\PPAD$, but for our purposes it is most convenient to define it as the class of all total problems that are reducible to the problem {\sc Gcircuit}, the problem of finding the fixed point of a continuous function specified by a {\em ``generalized circuit''}.  {\sc Gcircuit} is defined in the next section.

\section{A-CEEI is $\PPAD$-Complete}
\label{sec:A-CEEI_is_PPAD-C}

\begin{thm}
\label{thm:ppad-hard}
Computing a $\left(\sqrt{\frac{k M}{2}},\beta\right)$-CEEI is
$\PPAD$-complete, for some polynomially small $\beta>0.$
\end{thm}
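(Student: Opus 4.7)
The plan is to establish both directions of PPAD-completeness. For membership, I would formalize Budish's Kakutani-based existence proof into a polynomial-time reduction to \textsc{Gcircuit}. The Kakutani correspondence maps (prices, budgets) to sets of demand-maximizing allocations, and its near-fixed-points are precisely A-CEEI allocations; by discretizing prices on a sufficiently fine grid and encoding the resulting piecewise-constant demand correspondence as a generalized circuit, one obtains a polynomial-size instance whose approximate solutions translate back to A-CEEI allocations with clearing error at most $\sqrt{kM/2}$. This is essentially bookkeeping once Budish's proof is inspected.

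For hardness, I would reduce from \textsc{Gcircuit}. The scheme is: each wire $w$ of the input circuit is encoded by a dedicated ``wire course'' $c_w$, whose equilibrium price $p_{c_w}$ encodes the real-valued signal on $w$; each gate is implemented by a gadget consisting of a few auxiliary courses and a moderate number of students whose permissible bundles $\Psi_i$ and preferences $\preccurlyeq_i$ are tailored so that, at any A-CEEI, the prices of the wire courses incident to the gadget must obey the gate's input--output relation up to small error. I would construct gadgets for (i) numeraire/constant gates via courses with rigidly pinned prices, (ii) arithmetic gates (addition, scalar multiplication) via students whose preferred bundle switches when small price combinations cross thresholds, and (iii) comparison/switch gates, which provide the step-function behaviour needed to encode the Brouwer-style fixed-point structure.

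The central difficulty, which I expect to be the main obstacle, is bridging the indivisibility of student allocations with the continuous wire values of \textsc{Gcircuit}: each wire should take any value in $[0,1]$, yet individual students are forced to take integer bundles. My resolution would be to use the allowed budget slack $b_i^*\in[1,1+\beta]$ as the source of continuous tie-breaking --- by populating each gadget with many students whose budgets are finely staggered across $[1,1+\beta]$, the fraction of them that can just afford a given bundle at a given price becomes an (approximately) continuous monotone function of the prices, thus simulating real-valued gate behaviour. This is the reason $\beta$ must be polynomially small: it must be fine enough to resolve the gate accuracy needed for \textsc{Gcircuit}, but still positive. A parallel accounting must verify that each gadget contributes only $O(1)$ to $\sum_j z_j^2$, so that the total clearing error across $M$ courses fits within $\sqrt{kM/2}$ yet the per-gadget tolerance is small enough that an A-CEEI is forced to implement every gate correctly. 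Reading off the wire-course prices from any A-CEEI then yields a solution to the original \textsc{Gcircuit} instance, completing the reduction.
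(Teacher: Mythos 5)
Your high-level architecture for the hardness direction --- reduce from \textsc{Gcircuit}, encode each wire value as the equilibrium price of a dedicated course, and implement each gate by a small gadget of courses and students --- is exactly the paper's plan. But the mechanism you propose to make the gadgets work does not work: you want to populate each gadget with students whose budgets are ``finely staggered across $[1,1+\beta]$'' so that the fraction who can afford a bundle varies continuously with prices. The budgets $b_i^*$ are part of the \emph{output} of the problem, not the input; the reduction has no control over them beyond the constraint $b_i^*\in[1,1+\beta]$, and the gadgets must function for every adversarial choice of budgets in that range. This is precisely why the paper's gate guarantees carry a $\pm\beta$ (or $\pm 2\beta$) slack. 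No staggering is needed anyway: prices are already continuous, and the gadgets work by pure counting --- e.g.\ for NOT, $n_x$ students want $\{c_x,c_{1-x}\}$ and $q_{1-x}=n_x/2$, so if $p_{1-x}^*<1-p_x^*$ the course is overbooked by $n_x/2$, and if $p_{1-x}^*>1-p_x^*+\beta$ it is underbooked by $n_x/4$, either of which exceeds $\alpha$. Relatedly, the error accounting is not ``each gadget contributes $O(1)$ to $\sum_j z_j^2$'': a single violated gadget already forces clearing error larger than the \emph{total} allowance $\alpha$, and one takes $N$ large enough that $\alpha=\Theta(N/M)>\sqrt{kM/2}$. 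You also miss the one genuinely non-obvious obstacle in the reduction: each gadget's output course tolerates only a bounded number of ``additional'' students, and this tolerance shrinks as gadgets are chained; since generalized circuits contain cycles, naive concatenation would require unbounded course sizes. The paper needs a dedicated course-size amplification (COPY) gadget whose output tolerates twice as many extra students as its input.

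On membership, ``essentially bookkeeping'' is an overstatement, and encoding the whole discretized demand correspondence as a generalized circuit is not what is done. The paper walks through Budish's existence proof step by step, invokes PPAD only for the Kakutani fixed point of the convexified price-adjustment correspondence, and must \emph{derandomize} two probabilistic ingredients: the random perturbation (taxes) placing the budget-constraint hyperplanes in general position, and the randomized rounding of the fractional demands at the fixed point (handled by the method of conditional expectations). Without these steps the fixed point does not yield, in polynomial time, an allocation with clearing error $\sqrt{kM/2}$.
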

The rest of this section is devoted to the proof of this theorem.

\subsubsection*{Membership in $\PPAD$}
We first establish that the problem belongs to the class $\PPAD$; this proof is much harder than usual (see Appendix \ref{sec:in_ppad}).  We follow the steps of the existence proof in \citet{ACEEI_Bud11}, and show that each one can be carried out either in polynomial time, or through a fixed point.  One difficulty is that certain steps of Budish's proof are randomized, and we must be derandomized in polynomial time.



%



\ignore{
Our proof is along the lines of \citet{NASH-is-PPAD-hard_DGP09}. 
Given an instance of {\sc 3D-Brouwer}, we construct courses and students such that finding an approximate competitive equilibrium for them would solve the instance of {\sc 3D-Brouwer}.  This is done by constructing a few elaborate gadgets.  

Our plan is to create courses and students such that the prices of the courses simulate the function $f$ on a point $x$ in the cube.  To start, the prices of three courses $c_1,c_2,c_3$ will stand for the three coordinates of $x$.  The binary representation of these three numbers will be extracted by appropriate gadgets of courses and students --- namely, gadgets which decrease the price by half, compute the difference of two prices, and select the maximum of two prices.  Next, the calculation of $C_f$ will be simulated by more gadgets of courses and students, and finally we will ``close the loop'' by adding the resulting displacement to the prices of $c_1,c_2,c_3$.  As a result, the overall system of courses and students will be at an approximate competitive equilibrium if and only if this latter displacement is minuscule, that is, if we are close to a fixed point of $f$.  The proof of correctness then follows directly from that in \citet{NASH-is-PPAD-hard_DGP09}.  One last complication, namely that of ``brittle comparators,'' is handled by precisely the same redundancy maneuver as in  \citet{NASH-is-PPAD-hard_DGP09}.  
}

\subsubsection*{The problem {\sc Gcircuit}}
The reduction is from the PPAD-complete problem {\sc Gcircuit}, alluded to in the previous section.

Generalized circuits are similar to the standard algebraic circuits, 
the main difference being that generalized circuits contain cycles,
which allow them to verify fixed points of continuous functions.
Formally,

\begin{defn}[{Generalized circuits, \citet{2-player_nash_CDT09}}]
A {\em generalized circuit} $S$ is a pair $(V,\cal{T})$, where $V$ is a set of nodes and $\cal{T}$ is a
collection of gates. Every gate $T \in \cal{T}$ is a 5-tuple $T = (G, v_1, v_2, v)$, 
in which $G \in \{G_{/2}, G_{\frac{1}{2}}, G_{+}, G_{ - }, G_{<}, G_{\wedge}, G_{\vee}, G_{\neg}\}$%
\footnote{\citet{2-player_nash_CDT09} define slightly different gates, 
whose $\epsilon$-approximation can be simulated by $O(\log 1/\epsilon)$ of our gates: 
$G_{\zeta}$ can be simulated using $G_{\frac{1}{2}}$ and $G_{+}$;
and $G_{\times \zeta}$ and $G_{=}$ can be simulated using $G_{/2}$ and $G_{+}$.}
 is the type of the gate;
$v_1, v_2 \in V \cup \{nil\}$ are the first and second input nodes of the gate;
$v \in V$ is the output node.

The collection $\cal{T}$ of gates must satisfy the following important property:
For every two gates $T = (G, v_1, v_2, v)$ and $T' = (G', v'_1, v'_2, v')$ in $\cal{T}$, 
$v \neq v'$.
\end{defn}

Given a generalized circuit, we are interested in the computational problem of finding an assignment that simultaneously satisfies all the constraints defined by the gates.

\begin{defn} 
Given a generalized circuit $\cal{S} = (V,\cal{T})$, 
we say that an assignment $\mathrm{x} \colon V \rightarrow \mathbb{R}$ {\em $\epsilon$-approximately satisfies} $\cal{S}$ if:

\[
\forall v\in V \;\;\;  0 \leq \mathrm{x}[v] \leq 1 + \epsilon;
\]
and for each gate $T = (G, v_1, v_2, v) \in \cal{T}$ we have that $| \mathrm{x}[v] - f_G(\mathrm{x}[v_1], \mathrm{x}[v_2]) | < \epsilon$, 
where $f_G$  is defined as follows, depending on the type of gate $G$:
\begin{enumerate}
\item HALF: $f_{G_{/2}}\left(x\right)=x/2$
\item VALUE: $f_{G_{\frac{1}{2}}} \equiv\frac{1}{2}$
\item SUM: $f_{G_{+}}\left(x,y\right)=\min\left(x+y,1\right)$
\item DIFF: $f_{G_{-}}\left(x,y\right)=\max\left(x-y,0\right)$
\item LESS: $f_{G_{<}}\left(x,y\right)=\begin{cases}
1 & x>y+\beta\\
0 & y>x+\beta
\end{cases}$
\item AND: $f_{G_{\wedge}}\left(x,y\right)=\begin{cases}
1 & \left(x>\frac{1}{2}+\beta\right)\wedge\left(y>\frac{1}{2}+\beta\right)\\
0 & \left(x<\frac{1}{2}-\beta\right)\vee\left(y<\frac{1}{2}-\beta\right)
\end{cases}$
\item OR: $f_{G_{\vee}}\left(x,y\right)=\begin{cases}
1 & \left(x>\frac{1}{2}+\beta\right)\vee\left(y>\frac{1}{2}+\beta\right)\\
0 & \left(x<\frac{1}{2}-\beta\right)\wedge\left(y<\frac{1}{2}-\beta\right)
\end{cases}$
\item NOT: $f_{G_{\neg}}\left(x\right) = 1-x$
\end{enumerate}
\end{defn}

Given a generalized circuit $\cal{S} = (V,\cal{T})$, 
{\sc $\epsilon$-Gcircuit} is the problem of finding an assignment that $\epsilon$-approximately satisfies it. 
It is shown in \citet{2-player_nash_CDT09} to be $\PPAD$-complete for $\epsilon = {1\over \poly(|V|)}$.

\subsubsection*{Overview of the Reduction}
We shall reduce the {\sc $\epsilon$-Gcircuit} problem to that of finding an $(\alpha,\beta)$-CEEI, with approximation parameters $\alpha = \Theta(N/M)$ and $\epsilon = \beta/2$.
(Note that, by increasing $N$, we can make $\alpha$ arbitrarily large as a function of $M$; in particular, $\alpha > \sqrt{k M/2}$.)

We will construct gadgets (that is, small sets of courses, students, capacities and preferences) for the various types of gates in the generalized circuit.
Each gadget that we construct has one or more dedicated ``input course(s)'', a single ``output course'', and possibly some ``interior courses''.  An output course of one gadget can (and will) be an input to another.
The construction will guarantee that in any A-CEEI the price of
the output course will be approximately equal to the gate applied
to the prices of the input courses.

\subsubsection*{Gate gadgets}

To illustrate what needs to be done, we proceed to construct a gate for the function
$f_{G_{\neg}}\left(x\right)=1-x$; in particular, this implements a logic NOT.
\begin{lemma}
\label{lem:not_gadget}(NOT gadget) 

Let $n_{x}>4\alpha$ and suppose that the economy contains the following
courses:
\begin{itemize}
\item $c_{x}$ (the ``input course'') ;
\item $c_{{1-x}}$ with capacity $q_{{1-x}}=n_{x}/2$ (the
``output course'');
\end{itemize}

and the following set of students:
\begin{itemize}
\item \begin{flushleft}
$n_{x}$ students interested only in the schedule $\left\{ c_{x},c_{{1-x}}\right\} $;
\par\end{flushleft}
\end{itemize}

and suppose further that at most $n_{{1-x}}=n_{x}/4$ other
students are interested in course $c_{{1-x}}$.

Then in any $\left(\alpha,\beta\right)$-CEEI
\[
p_{{1-x}}^{*}\in\left[1-p_{x}^{*},1-p_{x}^{*}+\beta\right]
\]

\end{lemma}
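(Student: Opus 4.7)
The plan is to derive two one-sided bounds by studying the $n_x$ bundle students, whose only desirable schedule is $\{c_x, c_{1-x}\}$, and arguing that if $p_x^* + p_{1-x}^*$ falls outside the interval $[1, 1+\beta]$, then $c_{1-x}$ is either massively oversubscribed or massively undersubscribed, in either case exceeding the clearing-error budget $\alpha$. The hypothesis $n_x > 4\alpha$ is calibrated precisely so that a discrepancy of even $n_x/4$ seats on the single course $c_{1-x}$ by itself pushes $\sqrt{\sum_j z_j^2}$ past $\alpha$.

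For the lower bound $p_x^* + p_{1-x}^* \geq 1$, I would argue by contradiction: if $p_x^* + p_{1-x}^* < 1$, then since every budget satisfies $b_i^* \geq 1$, every bundle student can afford their unique preferred schedule and therefore enrolls in it. This places at least $n_x$ students into $c_{1-x}$, against capacity $n_x/2$, so $z_{1-x} \geq n_x/2 > 2\alpha$, contradicting $\sqrt{\sum_j z_j^2} \leq \alpha$.

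For the upper bound $p_x^* + p_{1-x}^* \leq 1 + \beta$, I would again proceed by contradiction: if $p_x^* + p_{1-x}^* > 1 + \beta$, then since every budget satisfies $b_i^* \leq 1 + \beta$, no bundle student can afford $\{c_x, c_{1-x}\}$, and so none of them enrolls in $c_{1-x}$. The only remaining demand on $c_{1-x}$ comes from the at most $n_x/4$ other interested students, so $c_{1-x}$ is undersubscribed by at least $n_x/2 - n_x/4 = n_x/4 > \alpha$. Provided $p_{1-x}^* > 0$, this contribution alone forces the clearing error to exceed $\alpha$, the required contradiction. Combining the two bounds yields $p_{1-x}^* \in [1 - p_x^*, 1 - p_x^* + \beta]$.

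The delicate step I expect to be the main obstacle is the corner case $p_{1-x}^* = 0$ in the upper-bound argument, because undersubscription at zero price is not penalized by the clearing-error functional. This case would force $p_x^* > 1 + \beta$, which must be excluded by appealing to a global invariant of the full reduction---namely, that every course price lies in $[0, 1+\beta]$---maintained by the fact that each input course to a gadget is itself the output of a previous gadget that has pinned it to this range. With this invariant taken as given, the corner case is vacuous and the two-sided bound holds unconditionally, completing the NOT gadget.
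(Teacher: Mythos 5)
Your proof is essentially the paper's: the same two-sided case split (if $p_{x}^{*}+p_{1-x}^{*}<1$ every bundle student enrolls and $c_{1-x}$ is overbooked by $n_{x}/2$; if $p_{x}^{*}+p_{1-x}^{*}>1+\beta$ none can afford it and the course is undersubscribed by $n_{x}/4$), with $n_{x}>4\alpha$ turning either discrepancy into a violation of the clearing-error bound. The corner case you flag at $p_{1-x}^{*}=0$ is genuine and is passed over silently in the paper's own proof, which counts the undersubscription toward $z_{1-x}$ without checking the price is positive; your resolution---excluding $p_{x}^{*}>1+\beta$ via the invariant that input courses are outputs of earlier gadgets whose prices are already pinned---is the right way to discharge it within the full reduction.
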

\begin{proof}
Observe that:
\begin{itemize}
\item If $p_{{1-x}}^{*}>1-p_{x}^{*}+\beta$, then none of the $n_{x}$
students will be able to afford the bundle $\left\{ c_{x},c_{{1-x}}\right\} $,
and therefore there will be at most $n_{{1-x}}=n_{x}/4$ students
enrolled in the $c_{{1-x}}$ - much less than the capacity
$n_{x}/2$. Therefore $z_{{1-x}}\geq n_{x}/4$.
\item On the other hand, if $p_{{1-x}}^{*}<1-p_{x}^{*}$, then all
$n_{x}$ students can afford the bundle $\left\{ c_{x},c_{{1-x}}\right\} $
- therefore the class will be overbooked by $n_{x}/2$; thus, $z_{{1-x}}\geq n_{x}/2$.
\end{itemize}
Therefore if $p_{{1-x}}^{*}\notin\left[1-p_{x}^{*},1-p_{x}^{*}+\beta\right]$,
then $\left\Vert z \right\Vert_{2}\geq n_{x}/4>\alpha$ - a contradiction to
$\left(\alpha,\beta\right)$-CEEI.
\end{proof}

Similarly, we construct gadgets that simulate all the gates of the generalized circuit:

\begin{lemma}
\label{lem:additional-gadgets}
Let $n_{x}\geq2^{8}\cdot\alpha$ and suppose that the economy has
courses $c_{x}$ and $c_{y}$. 
Then for any of the gate functions $f_G$ in the definition of {\sc $\epsilon$-Gcircuit}, 
we can add: a course $c_{z}$, and at most $n_{x}$
students interested in each of $c_{x}$ and $c_{y}$, 
such that in any $\left(\alpha,\beta\right)$-CEEI 
$p_{z}^{*}\in\left[f\left(p_{x}^{*},p_{y}^{*}\right)-2\beta,f_G\left(p_{x}^{*},p_{y}^{*}\right)+2\beta\right]$.
\ignore{
\begin{enumerate}
\item HALF: $f\left(x\right)=x/2$
\item DIFF: $f\left(x,y\right)=\max\left(x-y,0\right)$
\item SUM: $f\left(x,y\right)=\min\left(x+y,1\right)$
\item VALUE: $f\equiv\frac{1}{2}$
\item LESS: $f\left(x,y\right)=\begin{cases}
1 & x>y+\beta\\
0 & y>x+\beta
\end{cases}$
\item AND: $f\left(x,y\right)=\begin{cases}
1 & \left(x>\frac{1}{2}+\beta\right)\wedge\left(y>\frac{1}{2}+\beta\right)\\
0 & \left(x<\frac{1}{2}-\beta\right)\vee\left(y<\frac{1}{2}-\beta\right)
\end{cases}$
\item OR: $f\left(x,y\right)=\begin{cases}
1 & \left(x>\frac{1}{2}+\beta\right)\vee\left(y>\frac{1}{2}+\beta\right)\\
0 & \left(x<\frac{1}{2}-\beta\right)\wedge\left(y<\frac{1}{2}-\beta\right)
\end{cases}$
\end{enumerate}
In particular, $p_{z}^{*}\in\left[f\left(p_{x}^{*},p_{y}^{*}\right)-2\beta,f\left(p_{x}^{*},p_{y}^{*}\right)+2\beta\right]$
}

In particular, $p_{z}^{*}$ continue to satisfy the above inequalities in every $\left(\alpha,\beta\right)$-CEEI 
even if up to $n_{z}\leq n_{x}/2^{8}$ additional students (beyond the ones needed in the proof)
are interested in course $c_{z}$.
\end{lemma}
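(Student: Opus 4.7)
The plan is to build, for each remaining gate type, a gadget that mimics the template of Lemma~\ref{lem:not_gadget}: the output course $c_z$ has capacity about $n_x/2$, a dedicated set of $n_x$ students has a single ``large bundle'' whose total price is an affine function of the courses involved, and the capacity and budget constraints of an $(\alpha,\beta)$-CEEI force this bundle to be affordable by exactly a $1\pm\beta$ margin. After rearrangement this pins $p_z^*$ to within $2\beta$ of $f_G(p_x^*,p_y^*)$, exactly as in the NOT argument.

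For the affine gates SUM, DIFF, HALF, and VALUE I first invoke Lemma~\ref{lem:not_gadget} as a subroutine to obtain auxiliary ``flipped'' courses of price $\approx 1-p_u$ for each input $u$ that needs flipping; then the dedicated large bundles are: $\{c_x,c_y,c_{\neg z}\}$ of cost $p_x^*+p_y^*+(1-p_z^*)$ for SUM, which forces $p_z^*\approx p_x^*+p_y^*$ (the clip at $1$ is automatic from $p_z^*\le 1+\beta$); $\{c_y,c_z,c_{\neg x}\}$ of cost $p_y^*+p_z^*+(1-p_x^*)$ for DIFF, which forces $p_z^*\approx p_x^*-p_y^*$ (the clip at $0$ is automatic from $p_z^*\ge 0$); for HALF the bundle $\{c_z,c_{z'},c_{\neg x}\}$ paired with a side bundle $\{c_{z'},c_{\neg z}\}$ that couples $p_{z'}^*\approx p_z^*$, yielding $2p_z^*\approx p_x^*$; and for VALUE the pair of bundles $\{c_z,c_{z'}\}$ and $\{c_{z'},c_{\neg z}\}$, which together give $p_z^*+p_{z'}^*\approx 1$ and $p_{z'}^*\approx p_z^*$, hence $p_z^*\approx 1/2$.

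For the threshold gates LESS, AND, and OR I use amplification. Starting from $p_w:=\max(p_x^*-p_y^*,0)$ (one DIFF), I iterate SUM to form the $T$-fold sum of $p_w$ with itself, where $T=\lceil\log_2(1/\beta)\rceil$; because SUM saturates at $1$, the final value is $\ge 1-O(\beta)$ whenever $p_x^*>p_y^*+\beta$ and is exactly $0$ whenever $p_y^*\ge p_x^*$, matching $f_{G_<}$ within the allowed $2\beta$ slack (the ``don't care'' band of $f_{G_<}$ absorbs the intermediate regime). AND and OR then reduce to LESS once the inputs have been shifted by $-1/2$ via DIFF against a VALUE course.

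The main obstacle is controlling error accumulation and perturbation-robustness simultaneously. Two issues require care. First, each invocation of NOT contributes up to an additive $\beta$ to the error, so the affine constructions must have constant depth (which they do); the threshold constructions have $\Theta(\log(1/\beta))$ depth, but the clipping of the final SUM to $\{0\}$ or $[1-O(\beta),1+\beta]$ absorbs the accumulated error in exactly the decisive regimes. Second, each dedicated bundle is calibrated, exactly as in Lemma~\ref{lem:not_gadget}, so that any $p_z^*$ outside the target window produces an imbalance of at least $n_x/4$ on course $c_z$. Since we assume $n_x\ge 2^8\alpha$, this imbalance exceeds both $\alpha$ and the $n_x/2^8$ allowed external students on $c_z$, so an incorrect $p_z^*$ still violates the $\alpha$ clearing-error bound; this is what establishes the robustness clause at the end of the lemma and permits $c_z$ to feed into up to $2^8$ downstream gadgets.
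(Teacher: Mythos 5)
Your constructions for the affine gates (HALF, VALUE, SUM, DIFF) are essentially the paper's: a dedicated population whose top bundle costs an affine combination of the relevant prices, so that affordability versus capacity pins the output price to a $\beta$-window, with NOT gadgets supplying the complemented courses. (One caveat: in your HALF and VALUE gadgets the constraints $p_{z'}\approx p_z$ and $p_z+p_{z'}\approx p_x$ are enforced by \emph{separate} student populations and are mutually circular, so the one-sided ``unaffordable implies undersubscribed'' argument does not immediately apply to either course alone; the paper avoids this by giving a \emph{single} population an ordered list of the three bundles $\{c_z,c_1,c_{\bar x}\},\{c_z,c_2,c_{\bar x}\},\{c_1,c_2,c_{\bar x}\}$ and bounding the \emph{total} over/underbooking across $c_z,c_1,c_2$ simultaneously.)

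The real gap is in LESS, AND, OR. Your amplification scheme --- compute $\max(p_x^*-p_y^*,0)$ by DIFF and double it $\lceil\log_2(1/\beta)\rceil$ times --- fails the specification of $f_{G_<}$ at the decision boundary: when $p_x^*-p_y^*$ is only slightly larger than $\beta$, the DIFF gate's own $2\beta$ error can drive its output to $0$, and no number of doublings recovers a value near $1$; so you only get a comparator with brittleness $\Theta(\beta)$ rather than the stated $\beta$. Worse, the chain has depth $\Theta(\log(1/\beta))=\Theta(\log|V|)$, and the lemma's robustness guarantee degrades by a factor of $2^8$ per level, so after the chain the output tolerates only $n_x\beta^{8}$ extra students --- far below the $n_x/2^8$ the lemma promises, and incompatible with the hypothesis $n_x\geq 2^8\alpha$ (you would need $n_x\geq\alpha/\beta^{8}$, or to interleave course-size amplification gadgets, which the lemma's student budget does not allow). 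The paper instead builds the comparators directly in constant depth: for LESS, $n_x/4$ students have the ordered preference list $(\{c_{\bar x},c_y\},\{c_z\})$, so if $p_y^*>p_x^*+\beta$ the pair is unaffordable, everyone floods $c_z$, and its price is forced above $1$; if $p_x^*>p_y^*+\beta$ everyone takes the pair and $c_z$ empties, forcing $p_z^*=0$. AND and OR are the same trick with the VALUE course $c_{1/2}$ replacing $c_{\bar x}$. You should replace the amplification argument with direct constructions of this form.
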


We defer the proof of Lemma \ref{lem:additional-gadgets} to the appendix.

\subsubsection*{Course-size amplification}

So far, we have constructed gadgets that compute all the gates necessary for the
circuit in the reduction from {\sc $\epsilon$-Gcircuit}.
What happens when we try to concatenate them to form a circuit?
Recall the last sentence in the statement of Lemma \ref{lem:additional-gadgets}:
It says that the prices continue to behave like the gate that is simulated, 
as long as there are not too many additional students that try to take the output course.
(If there are more students, 
they may raise the price of the course beyond what we expect.)
In particular, {\em the number of additional students that may want the output course
is smaller than the number of students that want the input course.}

If we concatenated the gadgets without change,
we would need to have larger class sizes as we increase the depth of the simulated circuit.
This increase in class size is exponential in the depth of the circuit.
Things get even worse- since we reduce from {\em generalized} circuits,
our gates form cycles.
If the class size must increase at every gate it would have to be infinite!

To overcome this problem we construct a COPY gadget that preserves
the price from the input course, but is robust to twice as many additional
students:
\begin{lemma}
\label{lem:course-size_amp}
(Course-size amplification gadget) 

Let $n_{x}\geq100\alpha$ and suppose that the economy contains the
following courses:
\begin{itemize}
\item $c_{x}$ (the ``input course'')
\item for $i=1,\dots10$, $c_{i}$ with capacities $q_{i}=0.5\cdot n_{x}$
(``interior courses'');
\item $c_{x'}$ with capacity $q_{x'}$, s.t. $q_{x}\leq q_{x'}\leq4n_{x}$
(``output course'');
\end{itemize}

and the following sets of students:
\begin{itemize}
\item $n_{x}$ students interested in schedules $\left(\left\{ c_{x},c_{i}\right\} \right)_{i=1}^{10}$
(in this order);
\item $n_{i}=0.49\cdot n_{x}$ students ($\forall i$) interested in schedules\\
 $\left(\left\{ c_{x'},c_{i}\right\} ,\left\{ c_{i}\right\} ,\left\{ c_{i+1}\right\} ,\dots,\left\{ c_{10}\right\} \right)$
(in this order);
\end{itemize}

and suppose further that at most $n_{x'}=2n_{x}$ other students are
interested in course $c_{x'}$.

Then in any $\left(\alpha,\beta\right)$-CEEI
\[
p_{x'}^{*}\in\left[p_{x}^{*}-\beta,p_{x}^{*}+\beta\right]
\]

\end{lemma}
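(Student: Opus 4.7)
The plan is a two-sided contradiction argument analogous to the NOT gadget, but requiring substantially more care due to the richer preference structure of the group~2 students (those with schedules $(\{c_{x'}, c_i\}, \{c_i\}, \{c_{i+1}\}, \ldots)$). Two observations drive everything. First, a group~2 student targeting $c_i$ with budget $b_s$ enrolls in course $c_i$ via some bundle if and only if $p_i^* \leq b_s$, because prices are non-negative, so $\{c_{x'}, c_i\}$ being affordable already implies $\{c_i\}$ is; consequently the interior-course choice of each group~2 student depends only on $(p_j^*)_j$ and on her budget, and \emph{not} on $p_{x'}^*$. The role of $p_{x'}^*$ is only to decide whether she additionally enrolls in $c_{x'}$. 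Second, under $p_{x'}^* > p_x^* + \beta$, if any group~2 student targeting $c_i$ affords her top bundle, then $p_x^* + p_i^* < p_{x'}^* + p_i^* - \beta \leq b_s - \beta \leq 1$, so every group~1 student affords $\{c_x, c_i\}$; the symmetric implication holds when $p_{x'}^* < p_x^* - \beta$ with the roles of $c_x$ and $c_{x'}$ swapped.

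Assume $p_{x'}^* > p_x^* + \beta$ and split into cases. If some index $j$ has $p_x^* + p_j^* \leq 1$, then every group~1 student affords $\{c_x, c_j\}$ and therefore all $n_x$ of them flood into the smallest-index affordable interior course $c_{j^*}$, giving enrollment $\geq n_x > 0.5 n_x + \alpha$, which (using $n_x \geq 100\alpha$) contradicts the clearing-error bound. Otherwise $p_x^* + p_j^* > 1$ for every $j$, and combined with the hypothesis this gives $p_{x'}^* + p_j^* > 1 + \beta$ for every $j$, so no group~2 student can afford her $\{c_{x'}, c_i\}$ top bundle. Every group~2 student then cascades through her fallback list $\{c_i\}, \{c_{i+1}\}, \ldots$; taking $j^{\dagger}$ to be the smallest index with $p_{j^{\dagger}}^* \leq 1$, if $j^{\dagger} \geq 2$ then every group~2 student targeting $c_i$ for $i \leq j^{\dagger}$ (whose earlier prices exceed any budget) cascades into $c_{j^{\dagger}}$, driving its enrollment to at least $0.49 \cdot j^{\dagger} \cdot n_x \geq 0.98 n_x$, vastly exceeding the $0.5 n_x$ capacity. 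Combined with the global count that total interior enrollment must lie in $[5 n_x - \sqrt{10}\alpha, \; 5 n_x + \sqrt{10}\alpha]$ against a total potential demand of $5.9 n_x$, this closes the contradiction along the cascade chain.

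The symmetric direction $p_{x'}^* < p_x^* - \beta$ uses the dual implication: whenever any group~1 student affords $\{c_x, c_i\}$, every group~2 student targeting $c_i$ affords $\{c_{x'}, c_i\}$ and takes it. If the set of indices $i$ with $p_x^* + p_i^* \leq 1 + \beta$ is large, the $0.49 n_x$ group~2 students per such $i$ push $c_{x'}$'s total enrollment toward $4.9 n_x$, which together with up to $2 n_x$ other students overshoots the capacity $q_{x'} \leq 4 n_x$ by $\gg \alpha$; in the complementary case, few group~1 students afford anything and a symmetric cascade/underfilling argument on the interior courses closes it. The main technical obstacle in both directions is the boundary regime where $p_x^* + p_j^* \in (1, 1+\beta]$ for some $j$: here only a fraction of group~1 students can afford the corresponding bundle, that fraction depends on the (adversarial) budget distribution in $[1, 1+\beta]$, and one must combine the $L_2$ clearing bound carefully with the cascade structure of the group~2 fallbacks in order to still extract the required contradiction.
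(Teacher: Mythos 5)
Your two opening observations are correct, but the case analysis built on them has a genuine gap that you yourself flag and do not close: the regime where $p_{x}^{*}+p_{j}^{*}\in(1,1+\beta]$. The paper's proof avoids this regime entirely by first establishing a structural step that your proposal skips: in any $(\alpha,\beta)$-CEEI, \emph{every} interior course satisfies $p_{i}^{*}\in[1-p_{x}^{*},\,1-p_{x}^{*}+\beta]$, i.e., the ten interior courses simultaneously act as NOT gadgets. (That step takes the first index violating the upper bound, or any index violating the lower bound, and compares total enrollment to total capacity over the prefix $c_{1},\dots,c_{i}$, tracking the group-2 cascades.) Once this window is in hand, both directions of the main claim are one-line affordability computations that absorb the budget heterogeneity exactly: if $p_{x'}^{*}>p_{x}^{*}+\beta$ then $p_{x'}^{*}+p_{i}^{*}>1+\beta\geq b_{s}$ for every budget, so no group-2 student buys $\left\{ c_{x'},c_{i}\right\}$ and $c_{x'}$ is undersubscribed; if $p_{x'}^{*}<p_{x}^{*}-\beta$ then $p_{x'}^{*}+p_{i}^{*}<1\leq b_{s}$, so all $4.9\,n_{x}$ group-2 students buy their top bundle and $c_{x'}$ is overbooked past its capacity of at most $4n_{x}$. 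Your ``boundary regime'' is precisely the legitimate window in which the interior courses can clear, so no amount of care with the $\mathcal{L}_{2}$ bound will extract a contradiction there from the interior courses alone --- the argument has to pin down the interior prices first and then move the contradiction to $c_{x'}$.

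Two of your intermediate steps are also false as stated. In sub-case A, the $n_{x}$ group-1 students do not all flood into the smallest-index affordable interior course: budgets range over $[1,1+\beta]$, so different students have different smallest affordable indices; the correct statement is that they all land somewhere in the prefix $c_{1},\dots,c_{j^{*}}$, and one must add the group-2 students cascading into that prefix in order to beat its total capacity $0.5\,j^{*}n_{x}$ (this is the paper's prefix count $n_{x}(1-0.01\,j^{*})$). In sub-case B your contradiction targets the wrong courses: when every $p_{j}^{*}$ lies in the window $(1-p_{x}^{*},\,1-p_{x}^{*}+\beta]$, the interior courses can be approximately market-clearing and no overbooking of $c_{j^{\dagger}}$ occurs (your claimed cascade into $c_{j^{\dagger}}$ also fails because a price exceeding $1$ need not exceed a budget of $1+\beta$); the contradiction in that sub-case is the undersubscription of $c_{x'}$, which your writeup never states, and the concluding ``global count'' sentence does not substitute for it.
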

In particular, notice that the price of $c_{x'}$ is guaranteed to
approximate the price of $c_{x}$, even in the presence of additional
$n_{x'}=2n_{x}$ students - twice as many students as we added to
$c_{x}$.
\begin{proof}
We start by proving that all the $c_{i}$'s simulate NOT gadgets simultaneously,
i.e. for every $i$ and every $\left(\alpha,\beta\right)$-CEEI,
$p_{i}^{*}\in\left[1-p_{x}^{*},1-p_{x}^{*}+\beta\right]$.
\begin{itemize}
\item If $p_{i}^{*}>1-p_{x}^{*}+\beta$, assume wlog that it is the first
such $i$, i.e. $p_{j}^{*}\leq1-p_{x}^{*}+\beta<p_{i}^{*}$ for every
$j<i$.

None of the $n_{x}$ students can afford buying both $c_{x}$ and
$c_{i}$. Furthermore, for every $j<i$, none of the $n_{j}$ students
will prefer $c_{i}$ over $c_{j}$. Therefore at most $n_{i}$ students
will take this course: $z_{i}^{*}\geq0.01n_{x}$.

\item If, on the other hand, $p_{i}^{*}<1-p_{x}^{*}$, then all $n_{x}$
students will buy course $c_{i}$ or some previous course $c_{j}$
(for $j\leq i$); additionally for every $j\leq i$, each of the $n_{j}$
corresponding students will buy some course $c_{k}$ for $j\leq k\leq i$.
Therefore the total overbooking of classes $1,\dots,i$ will be at
least $\sum_{j\leq i}z_{j}^{*}\geq n_{x}\cdot\left(1-0.01i\right)$
- a contradiction to $\left(\alpha,\beta\right)$-CEEI.
\end{itemize}
Now that we established that $p_{i}^{*}\in\left[1-p_{x}^{*},1-p_{x}^{*}+\beta\right]$,
we shall prove the main claim, i.e. that $p_{x'}^{*}\in\left[p_{x}^{*}-\beta,p_{x}^{*}+\beta\right]$.
\begin{itemize}
\item If $p_{x'}^{*}>p_{x}^{*}+\beta$, then none of the $n_{i}$ students,
for any $n_{i}$, can afford buying both $c_{x'}$ and $c_{i}$. Therefore,
even in the presence of additional $n_{x'}=2n_{x}$ students who want
to take $c_{x'}$, the class will be undersubscribed by $z_{x'}^{*}\geq q_{x'}-n_{x'}=2n_{x}$
\item If $p_{x'}^{*}<x+\beta$, then all $n_{i}$ students, for each $i$,
can afford to buy their top schedule - both $\left\{ c_{i},c_{x'}\right\} $.
Therefore $c_{x'}$ will be oversubscribed by at least $z_{x'}^{*}\geq0.9\cdot n_{x}$
- a contradiction to $\left(\alpha,\beta\right)$-CEEI.
\end{itemize}
\end{proof}

Finally, given an instance of {\sc $\epsilon$-Gcircuit}, 
we can use  the gadgets we constructed in Lemmata \ref{lem:not_gadget}-\ref{lem:course-size_amp}
to construct an instance of $(\alpha,\beta)$-CEEI  that simulates the generalized circuit.

\qed\\


\section{$\NP$ hardness}
\citet{ACEEI_Bud11} shows that his existence theorem is tight, that is,
there exist economies in which it is impossible to achieve less than 
$\Omega\left(\sqrt{k M}\right)$ market clearing error.
One may hope that on instances encountered in practice, 
a better approximation may be possible, and finding it may not be prohibitively hard.
We next show that even in {\em economies that admit an exact CEEI},
it is $\NP$-hard to find even a {\em constant factor improvement} over the 
$\Omega\left(\sqrt{k M}\right)$ bound.

\begin{thm}
\label{thm:np}
It is $\NP$-hard to distinguish between an economy that has an exact
CEEI, and an economy that does not have a $\left( \Omega\left(\sqrt{N+M}\right) ,\beta\right)$-CEEI
for any $0\leq\beta<1$.
\end{thm}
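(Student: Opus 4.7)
\textbf{Proof proposal for Theorem~\ref{thm:np}.} The natural approach is to reduce from a gap-version of an $\NP$-complete constraint satisfaction problem---concretely, gap-3SAT, whose hardness follows from the PCP theorem: it is $\NP$-hard to distinguish a satisfiable 3SAT instance from one in which every assignment violates at least $\epsilon m$ of the $m$ clauses. Given such a 3SAT instance on $n$ variables and $m$ clauses, I would construct an allocation instance with $N + M = \Theta(n + m)$ so that (i)~a satisfying assignment yields an exact CEEI, and (ii)~any $(\alpha, \beta)$-CEEI with $\beta < 1$ implicitly defines a truth assignment whose violated clauses each contribute $\Omega(1)$ to $\sum_j z_j^2$. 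Then the PCP gap forces $\Omega(\epsilon m)$ independent violations, giving $\alpha = \Omega(\sqrt{m}) = \Omega(\sqrt{N + M})$.

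The reason the full range $0 \le \beta < 1$ can be handled uniformly is an \emph{integer-prices} trick. I would design the gadgets so that the only prices realized in an approximate equilibrium (on the courses that matter) are integers in $\{0, 1\}$. Since every budget $b_i^*$ lies in $[1, 1 + \beta) \subsetneq [1, 2)$, an integer-priced bundle $S$ is affordable to student $i$ if and only if $\sum_{j \in S} p_j^* \le 1$. Affordability is therefore independent of the particular $\beta$ in the allowed range: any arguments that work at $\beta = 0$ carry over verbatim to $\beta$ arbitrarily close to $1$, giving the uniform-in-$\beta$ hardness claimed by the theorem.

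Concretely, for each variable $x_i$ I would build a small \textbf{variable gadget}, in the style of Lemmas~\ref{lem:not_gadget}--\ref{lem:course-size_amp}, consisting of a pair of courses $c_i, \bar c_i$ (each of unit capacity) together with a handful of students whose preferences force, in any low-clearing-error allocation, exactly one of $p_{c_i}, p_{\bar c_i}$ to equal $1$ and the other $0$---representing the truth value of $x_i$. For each clause I would attach a dedicated \textbf{clause gadget} with its own private output course and students, engineered so that if all three literal-courses of the clause are priced at $0$ then the gadget is unavoidably imbalanced---say, its private course is oversubscribed by at least one seat---while otherwise the gadget clears. When the 3SAT instance is satisfiable, setting prices according to a satisfying assignment and budgets all equal to $1$ produces an exact CEEI (no clearing error, all students take their top affordable bundle, all courses clear exactly).

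The main obstacle, and what dictates the design above, is ensuring that violated-clause imbalances contribute \emph{independently} to $\sum_j z_j^2$: with an $L_2$ measure, the adversary wins if it can route all clause deficits through a few shared courses whose $z_j$'s combine linearly rather than quadratically. The fix is that each clause gadget carries its own private courses on which only its own failure can register; this locality is the analogue of the interior-course strategy used in Lemma~\ref{lem:course-size_amp}. Once this modularity is in place, the PCP gap delivers $\Omega(m)$ clauses whose gadgets are each misclearing by $\Omega(1)$ on a distinct coordinate of $z$, and the quadratic sum then yields $\alpha = \Omega(\sqrt{m}) = \Omega(\sqrt{N + M})$, completing the reduction.
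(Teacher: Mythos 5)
Your overall architecture is the same as the paper's: reduce from a gap version of SAT, build a variable gadget whose prices encode a truth value and a clause gadget with private courses that must miscler when the clause is violated, get completeness by reading off an exact CEEI from a satisfying assignment, and get soundness by summing $\Omega(1)$ contributions to $\sum_j z_j^2$ over $\Omega(\epsilon m)$ distinct coordinates. You also correctly identify the key $\mathcal{L}_2$ subtlety --- that violations must land on distinct coordinates rather than being routed through shared courses --- which is exactly the consideration driving the paper's gadget design.

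There is, however, a genuine gap in the soundness accounting: you reduce from plain PCP gap-3SAT with \emph{unbounded variable occurrences}, and your claim that each violated clause contributes $\Omega(1)$ on a private coordinate does not follow. A violated clause need not register at its own clause gadget; the violation can instead be absorbed by an \emph{inconsistent variable gadget} (e.g.\ one where both $c_i$ and $\bar c_i$ end up cheap and available). Such a broken variable gadget contributes only $O(1)$ to $\sum_j z_j^2$ --- its courses and students are a constant-size set --- yet it silences \emph{every} clause gadget containing that variable. If some variable occurs in $\Theta(m)$ clauses, the adversary breaks that one gadget, all its clause gadgets clear, and the total clearing error is $O(1)$ rather than $\Omega(\sqrt{m})$. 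This is precisely why the paper reduces from 3SAT-5 (each variable in exactly 5 clauses, via Feige), so that ``each deviation from market clearing can affect at most 5 clauses'' and $\epsilon n$ unsatisfied clauses force at least $\epsilon n/5$ misclering coordinates. Your proof needs this sparsification step (or an explicit per-occurrence copy/consistency gadget) to go through. A secondary, smaller divergence: your $\{0,1\}$ integer-price trick for handling all $\beta<1$ uniformly is a reasonable alternative to the paper's prices in $\{0,\tfrac16,1\}$ with a budget-diluting course $D$, but you would still need to exhibit a variable gadget whose only low-error price configurations are the two intended $\{0,1\}$ assignments while any other configuration produces detectable misclearing --- the paper's $13$-course, two-student gadget with inner courses $D_L,D_C,D_R$ shows this requires a nontrivial case analysis that your sketch does not yet supply.
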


In particular, since our reduction uses a constant $k$, it means
that it is $\NP$-complete to find an $\left(\Omega\left(\sqrt{k M}\right),\beta\right)$-CEEI
--- an approximation factor smaller only by a multiplicative constant than the approximation guaranteed by the existence theorem of \citet{ACEEI_Bud11}.

\subsubsection*{Comparison to Theorem  \ref{thm:ppad-hard}}
Theorem  \ref{thm:ppad-hard} is in some sense stronger than Theorem  \ref{thm:np}
in that it applies to a larger market clearing error.
In turn, Theorem  \ref{thm:np} is stronger in two ways:
(1) it gives $\NP$-hardness, as opposed to $\PPAD$-hardness; and
(2) it applies to any $0 \leq \beta < 1$, as opposed to a polynomially small $\beta$.


\subsection{Proof}
We reduce from 3SAT-5, i.e., a SAT instance in which every clause contains
exactly 3 variables, and each variable appears in exactly 5 clauses.
\citet{sparse_SAT_inapproximability} proved that it is $\NP$-hard
to distinguish between a satisfiable 3SAT-5 instance, and a 3SAT-5
instance where at most $1-\epsilon$ can be satisfied, for some $\epsilon>0$%
\footnote{In fact, an equivalent result for 3SAT-$B$ for any constant $B$
would suffice for our techniques. Hardness of approximation with perfect
completeness for 3SAT-$B$ was proven by \citet{sparse_SAT_inapproximability_PY,sparse_SAT_inapproximability_AS,sparse_SAT_inapproximability_ALMSS}.%
}. 

Given a 3SAT-5 formula, we construct a gadget for each variable and
each clause. The gadgets are constructed so that for any assignment
that completely satisfies the formula there exists an exact CEEI
in the economy. 

Furthermore, given an approximate CEEI for the economy which exactly
clears the courses in a {\em subset} of the gadgets, one can recover an
assignment for the 3SAT-5 formula that satisfies all the clauses corresponding
to the same subset. Informally, this means that for every clause that we are
unable to satisfy in the 3SAT-5 formula, there must be a deviation
from exact market clearing in the gadget corresponding to either that
clause, or one of its variables.

Because we use a sparse 3SAT, each deviation from market clearing
can affect at most 5 clauses. Each variable gadget uses $13$ courses,
and each clause gadget uses only $1$ more. For an instance with $n$
clauses and $\frac{3}{5}n$ variables, we have exactly $M=\frac{44}{5}n$
courses. Finally, if $\epsilon n$ of the clauses are unsatisfied,
then the market clearing error must be at least $\sqrt{\frac{1}{5}\cdot\epsilon n}=\sqrt{\frac{\epsilon}{44}\cdot M}$.
Since $N<M$ and $\epsilon>0$ is a constant, 
we get $\NP$-hardness with $\alpha = \Omega\left(\sqrt{M+N}\right)$.

\begin{figure}
\footnotesize
\caption{Example gadgets}

\subfloat[Variable gadget, student $s_{T}$]
{

\begin{minipage}[t]{1\columnwidth}%
\begin{center}
\tikzstyle{every node}=[font=\scriptsize]
\begin{tikzpicture}[scale = 0.6]
%
\coordinate (D_L1) at (0,0); \coordinate (D_R1) at (3.2,0); \coordinate (D_C1) at ($0.5*(D_L1) + 0.5*(D_R1)$);
\coordinate (O_T1) at ($(D_L1)+(0,4)$); \coordinate (O_F1) at ($(D_R1)+(O_T1)-(D_L1)$);
\coordinate (O_T2) at ($(O_T1) + (-0.5,-0.9)$); \coordinate (O_T3) at ($(O_T1) + (0,-0.9)$); \coordinate (O_T4) at ($(O_T1) + (0.5,-0.9)$); \coordinate (O_T5) at ($(O_T1) + (0,-1.4)$); \coordinate (O_F2) at ($(O_F1) + (-0.5,-0.9)$); \coordinate (O_F3) at ($(O_F1) + (0,-0.9)$); \coordinate (O_F4) at ($(O_F1) + (0.5,-0.9)$); \coordinate (O_F5) at ($(O_F1) + (0,-1.4)$);
\draw [fill=black] (D_L1) circle [radius=0.1]; \draw [fill=black] (D_C1) circle [radius=0.1]; \draw [fill=black] (D_R1) circle [radius=0.1]; \draw [fill=black] (O_T1) circle [radius=0.1]; \draw [fill=black] (O_F1) circle [radius=0.1];
\draw [fill=black] (O_T2) circle [radius=0.1]; \draw [fill=black] (O_T3) circle [radius=0.1]; \draw [fill=black] (O_T4) circle [radius=0.1]; \draw [fill=black] (O_T5) circle [radius=0.1]; \draw [fill=black] (O_F2) circle [radius=0.1]; \draw [fill=black] (O_F3) circle [radius=0.1]; \draw [fill=black] (O_F4) circle [radius=0.1]; \draw [fill=black] (O_F5) circle [radius=0.1];
\node [below left] at (D_L1) {$D_L$}; \node [below ] at (D_C1) {$D_C$}; \node [below right] at (D_R1) {$D_R$};
\node [below] at (O_T1) {$X_T^1$}; \node [below] at (O_F1) {$X_F^1$};
\draw [loosely dashed] ($0.5*(D_L1)+0.5*(D_C1)$) ellipse (2 and 1);
\draw [loosely dashed] ($0.5*(D_L1)+0.5*(O_T1)$) ellipse (1 and 2.5);
\draw [loosely dashed] (D_R1) circle (0.5);
\end{tikzpicture} 
\end{center}
The courses (dots) and bundles (ellipses) that interest student $s_{T}$
of the variable gadget. In particular, note that $s_{T}$ may take
the courses corresponding to the assignment $X=\mbox{True}$.%
\end{minipage}
}

\vspace{0.3cm}

\subfloat[Variable gadget, student $s_{F}$]{
\begin{minipage}[t]{1\columnwidth}%
\begin{center}
\tikzstyle{every node}=[font=\scriptsize]
\begin{tikzpicture}[scale = 0.6]
\coordinate (D_L1) at (0,0); \coordinate (D_R1) at (3.2,0); \coordinate (D_C1) at ($0.5*(D_L1) + 0.5*(D_R1)$);
\coordinate (O_T1) at ($(D_L1)+(0,4)$); \coordinate (O_F1) at ($(D_R1)+(O_T1)-(D_L1)$);
\coordinate (O_T2) at ($(O_T1) + (-0.5,-0.9)$); \coordinate (O_T3) at ($(O_T1) + (0,-0.9)$); \coordinate (O_T4) at ($(O_T1) + (0.5,-0.9)$); \coordinate (O_T5) at ($(O_T1) + (0,-1.4)$); \coordinate (O_F2) at ($(O_F1) + (-0.5,-0.9)$); \coordinate (O_F3) at ($(O_F1) + (0,-0.9)$); \coordinate (O_F4) at ($(O_F1) + (0.5,-0.9)$); \coordinate (O_F5) at ($(O_F1) + (0,-1.4)$);
\draw [fill=black] (D_L1) circle [radius=0.1]; \draw [fill=black] (D_C1) circle [radius=0.1]; \draw [fill=black] (D_R1) circle [radius=0.1]; \draw [fill=black] (O_T1) circle [radius=0.1]; \draw [fill=black] (O_F1) circle [radius=0.1];
\draw [fill=black] (O_T2) circle [radius=0.1]; \draw [fill=black] (O_T3) circle [radius=0.1]; \draw [fill=black] (O_T4) circle [radius=0.1]; \draw [fill=black] (O_T5) circle [radius=0.1]; \draw [fill=black] (O_F2) circle [radius=0.1]; \draw [fill=black] (O_F3) circle [radius=0.1]; \draw [fill=black] (O_F4) circle [radius=0.1]; \draw [fill=black] (O_F5) circle [radius=0.1];
\node [below left] at (D_L1) {$D_L$}; \node [below ] at (D_C1) {$D_C$}; \node [below right] at (D_R1) {$D_R$};
\node [below] at (O_T1) {$X_T^1$}; \node [below] at (O_F1) {$X_F^1$};
\draw [dotted]  ($0.5*(D_R1)+0.5*(D_C1)$) ellipse (2 and 1);
\draw [dotted]  ($0.5*(D_R1)+0.5*(O_F1)$) ellipse (1 and 2.5);
\draw [dotted]  (D_L1) circle (0.5); 
\end{tikzpicture}

\par\end{center}

The courses (dots) and bundles (ellipses) that interest student $s_{F}$
of the variable gadget. In particular, note that $s_{T}$ may take
the courses corresponding to the assignment $X=\mbox{False}$.%
\end{minipage}
}

\vspace{0.3cm}

\subfloat[Clause gadget]{

\begin{minipage}[t]{1\columnwidth}%
\begin{center}
\begin{tikzpicture}[scale = 0.6]
\coordinate (D_L1) at (0,0); \coordinate (D_R1) at (3.2,0); \coordinate (D_C1) at ($0.5*(D_L1) + 0.5*(D_R1)$);
\coordinate (O_T1) at ($(D_L1)+(0,4)$); \coordinate (O_F1) at ($(D_R1)+(O_T1)-(D_L1)$);
\coordinate (O_T2) at ($(O_T1) + (-0.5,-0.9)$); \coordinate (O_T3) at ($(O_T1) + (0,-0.9)$); \coordinate (O_T4) at ($(O_T1) + (0.5,-0.9)$); \coordinate (O_T5) at ($(O_T1) + (0,-1.4)$); \coordinate (O_F2) at ($(O_F1) + (-0.5,-0.9)$); \coordinate (O_F3) at ($(O_F1) + (0,-0.9)$); \coordinate (O_F4) at ($(O_F1) + (0.5,-0.9)$); \coordinate (O_F5) at ($(O_F1) + (0,-1.4)$);
\draw [fill=black] (O_T1) circle [radius=0.1]; 
\draw [fill=black] (O_F1) circle [radius=0.1]; 
\node [below] at (O_T1) {$X_T^1$}; \node [below] at (O_F1) {$X_F^1$};
\coordinate (X_T) at (O_T1); \coordinate (X_F) at (O_F1); \coordinate (Y_T) at ($(X_F) + 0.5*(X_F) - 0.5*(X_T)$); \coordinate (Y_F) at ($(Y_T) + 0.3*(X_F) - 0.3*(X_T)$); \coordinate (Z_T) at ($(Y_F) + 0.5*(X_F) - 0.5*(X_T)$); \coordinate (Z_F) at ($(Z_T) + 0.3*(X_F) - 0.3*(X_T)$);
\coordinate (D) at ($0.5*(X_T) + 0.5*(Z_F) + (0, 1.5)$);
\draw [fill=black] (Y_T) circle [radius=0.1]; \draw [fill=black] (Y_F) circle [radius=0.1]; \draw [fill=black] (Z_T) circle [radius=0.1]; \draw [fill=black] (Z_F) circle [radius=0.1]; \draw [fill=black] (D) circle [radius=0.1];
\node [below ] at (Y_T) {$Y_T$}; \node [below ] at (Y_F) {$Y_F$};
\node [below ] at (Z_T) {$Z_T$}; \node [below ] at (Z_F) {$Z_F$};
\node [above left] at (D) {$D$};
\draw [rounded corners=6, solid]  	($(X_F) + (-0.7, 0.7)$) -- ($(X_F) + (-0.7, -0.7)$) -- ($(X_F) + (0.7, -0.7)$) -- ($(X_F) + (0.7, 0.7)$) -- 	($(Y_F) + (-0.7, 0.7)$) -- ($(Y_F) + (-0.7, -0.7)$) -- ($(Y_F) + (0.7, -0.7)$) -- ($(Y_F) + (0.7, 0.7)$) -- 	($(Z_F) + (-0.7, 0.7)$) -- ($(Z_F) + (-0.7, -0.7)$) -- ($(Z_F) + (0.7, -0.7)$) -- ($(Z_F) + (0.7, 1.4)$) -- 	($(D) + (0.7, -0.7)$) -- ($(D) + (0.7, 0.7)$) -- ($(D) + (-0.7, 0.7)$) -- cycle;
\end{tikzpicture} 
\end{center}

The courses (dots) and bundle (shape) corresponding to the assignment  $\left(X,Y,Z\right)=\left(\mbox{True},\mbox{True},\mbox{True}\right)$,
one of the seven bundles that may be picked by the clause student.%
\end{minipage}}

\end{figure}
\normalsize

\begin{figure}[t]
\caption{Putting the gadgets together}
%
%
%
\begin{minipage}[t]{1\columnwidth}%
\begin{tikzpicture}[scale = 0.7]
\tikzstyle{every node}=[font=\scriptsize]
\coordinate (D_L1) at (0,0); \coordinate (D_R1) at (3.2,0); \coordinate (D_C1) at ($0.5*(D_L1) + 0.5*(D_R1)$);
\coordinate (O_T1) at ($(D_L1)+(0,4)$); \coordinate (O_F1) at ($(D_R1)+(O_T1)-(D_L1)$);
\coordinate (O_T2) at ($(O_T1) + (-0.5,-0.9)$); \coordinate (O_T3) at ($(O_T1) + (0,-0.9)$); \coordinate (O_T4) at ($(O_T1) + (0.5,-0.9)$); \coordinate (O_T5) at ($(O_T1) + (0,-1.4)$); \coordinate (O_F2) at ($(O_F1) + (-0.5,-0.9)$); \coordinate (O_F3) at ($(O_F1) + (0,-0.9)$); \coordinate (O_F4) at ($(O_F1) + (0.5,-0.9)$); \coordinate (O_F5) at ($(O_F1) + (0,-1.4)$);
\draw [fill=black] (D_L1) circle [radius=0.1]; \draw [fill=black] (D_C1) circle [radius=0.1]; \draw [fill=black] (D_R1) circle [radius=0.1]; \draw [fill=black] (O_T1) circle [radius=0.1]; \draw [fill=black] (O_F1) circle [radius=0.1];
\draw [fill=black] (O_T2) circle [radius=0.1]; \draw [fill=black] (O_T3) circle [radius=0.1]; \draw [fill=black] (O_T4) circle [radius=0.1]; \draw [fill=black] (O_T5) circle [radius=0.1]; \draw [fill=black] (O_F2) circle [radius=0.1]; \draw [fill=black] (O_F3) circle [radius=0.1]; \draw [fill=black] (O_F4) circle [radius=0.1]; \draw [fill=black] (O_F5) circle [radius=0.1];
\node [below left] at (D_L1) {$D_L$}; \node [below ] at (D_C1) {$D_C$}; \node [below right] at (D_R1) {$D_R$};
\node [below] at (O_T1) {$X_T^1$}; \node [below] at (O_F1) {$X_F^1$};
\draw [loosely dashed] ($0.5*(D_L1)+0.5*(D_C1)$) ellipse (2 and 1); \draw [dotted]  ($0.5*(D_R1)+0.5*(D_C1)$) ellipse (2 and 1);
\draw [loosely dashed] ($0.5*(D_L1)+0.5*(O_T1)$) ellipse (1 and 2.5); \draw [dotted]  ($0.5*(D_R1)+0.5*(O_F1)$) ellipse (1 and 2.5);
\draw [dotted]  (D_L1) circle (0.5); \draw [loosely dashed] (D_R1) circle (0.5);
\coordinate (X_T) at (O_T1); \coordinate (X_F) at (O_F1); \coordinate (Y_T) at ($(X_F) + 0.5*(X_F) - 0.5*(X_T)$); \coordinate (Y_F) at ($(Y_T) + 0.3*(X_F) - 0.3*(X_T)$); \coordinate (Z_T) at ($(Y_F) + 0.5*(X_F) - 0.5*(X_T)$); \coordinate (Z_F) at ($(Z_T) + 0.3*(X_F) - 0.3*(X_T)$);
\coordinate (D) at ($0.5*(X_T) + 0.5*(Z_F) + (0, 1.5)$);
\draw [fill=black] (Y_T) circle [radius=0.1]; \draw [fill=black] (Y_F) circle [radius=0.1]; \draw [fill=black] (Z_T) circle [radius=0.1]; \draw [fill=black] (Z_F) circle [radius=0.1]; \draw [fill=black] (D) circle [radius=0.1];
\node [below ] at (Y_T) {$Y_T$}; \node [below ] at (Y_F) {$Y_F$};
\node [below ] at (Z_T) {$Z_T$}; \node [below ] at (Z_F) {$Z_F$};
\node [above left] at (D) {$D$};
\draw [rounded corners=11, solid]  	($(X_F) + (-0.7, 0.7)$) -- ($(X_F) + (-0.7, -0.7)$) -- ($(X_F) + (0.7, -0.7)$) -- ($(X_F) + (0.7, 0.7)$) -- 	($(Y_F) + (-0.7, 0.7)$) -- ($(Y_F) + (-0.7, -0.7)$) -- ($(Y_F) + (0.7, -0.7)$) -- ($(Y_F) + (0.7, 0.7)$) -- 	($(Z_F) + (-0.7, 0.7)$) -- ($(Z_F) + (-0.7, -0.7)$) -- ($(Z_F) + (0.7, -0.7)$) -- ($(Z_F) + (0.7, 1.4)$) -- 	($(D) + (0.7, -0.7)$) -- ($(D) + (0.7, 0.7)$) -- ($(D) + (-0.7, 0.7)$) -- cycle;
\end{tikzpicture} %
\end{minipage}

\vspace{0.3cm}
\footnotesize

Some of the courses (dots) and bundles (shapes) of an economy simulating
the formula $\Psi = X\vee\neg Y\vee Z$. 

From bottom to top: the bundles that interest student $s_{T}$ (dashed)
and $s_{F}$ (dotted) of the variable gadget for variable $X$; and
the bundle (solid) corresponding to the assignment $\left(X,Y,Z\right)=\left(\mbox{True},\mbox{True},\mbox{True}\right)$,
one of the seven bundles that may be picked by the clause student.
\end{figure}
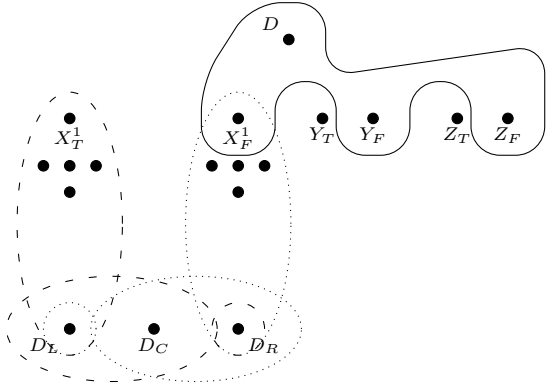


\subsubsection*{Variable gadget}

For each variable $x_{i}$, we have a variable gadget that forces
a consistent assignment to $x_{i}$. 
The gadget contains $5$ pairs of
``output courses'' $O_{T}^{j},O_{F}^{j}$; each of these pairs is
also part of the ``input courses'' of a clause gadget.
Additionally, the gadget has three
inner courses: $D_{L},D_{C},D_{R}$. The gadget also has two students:
$s_{T}$ has preference list: $\left\{ D_{L},D_{C}\right\} $, $\left\{ D_{L},O_{T}^{1},\dots O_{T}^{5}\right\} $,
$\left\{ D_{R}\right\} $; and $s_{F}$ has preference list: $\left\{ D_{R},D_{C}\right\} $,
$\left\{ D_{R},O_{F}^{1},\dots,O_{F}^{5}\right\} $, $\left\{ D_{L}\right\} $.

\begin{itemize}
	\item \textbf{Soundness: }It is easy to see that, in any CEEI, $x_{i}$ cannot
	be assigned more than one value: otherwise neither student will be
	assigned $D_{C}$; yet if $D_{C}$ has price zero, then both students
	would prefer the respective bundles that contain it.

	If, on the other hand, neither $O_{T}^{j}$ nor $O_{F}^{j}$ is assigned,
	we must again have a nonzero market clearing error for the courses in this gadget:
	\begin{itemize}
		\item If all the inner courses have price zero, then $D_{C}$ will be over
		demanded; 

		\item If $D_{L}$ and $D_{R}$ have price zero, then under any assignment
		either one of the three will be over demanded, or $D_{C}$ will be
		under demanded;

		\item If $p\left(D_{C}\right)=0$, $p\left(D_{L}\right)>0$, and, wlog,
		$p\left(D_{L}\right)\geq p\left(D_{R}\right)$, then either $D_{C}$
		will be over demanded, or $D_{L}$ will be under demanded; 

		\item Finally, since $\beta<1$, if $D_{C}$ has nonzero price, then either
		it is under demanded, or one of the three inner courses must be over
		demanded.
	\end{itemize}

	\item \textbf{Completeness:} For an assignment with $x_{i}=\mbox{True}$,
	let the prices of $O_{T}^{j},O_{F}^{j},D_{L},D_{C},D_{R}$ be $\frac{1}{6},0,\frac{1}{6},1,0$,
	respectively. Under these prices, student $s_{T}$ will prefer bundle
	$\left\{ D_{L},O_{T}^{1},\dots O_{T}^{4}\right\}$%
		\footnote{Recall that in the completeness we show the existence of {\em exact} CEEI, 
				 so all the budgets are exactly $1$.}, 
	while student $S_{F}$ will choose bundle $\left\{ D_{R},D_{C}\right\} $.
\end{itemize}

\subsubsection*{Clause gadget}

For each clause containing variables $\left\{ X,Y,Z\right\} $, consider
seven courses: six input courses $X_{T},X_{F},Y_{T},Y_{F},Z_{T},Z_{F}$
(where each pair is the output courses of a variable gadget), and
a single ``budget diluting'' course $D$. We also have a single
gadget student, who is interested in any of the seven bundles corresponding
to a satisfying assignment.

For example if the clause is $\left(X\vee\neg Y\vee Z\right)$, the
gadget student would be interested in the bundles: $\left\{ X_{F},Y_{F},Z_{F},D\right\} $,
$\left\{ X_{F},Y_{F},Z_{T},D\right\} $, $\left\{ X_{F},Y_{T},Z_{F},D\right\} $,
$\left\{ X_{F},Y_{T},Z_{T},D\right\} $, $\left\{ X_{T},Y_{F},Z_{F},D\right\} $,
$\left\{ X_{F},Y_{T},Z_{F},D\right\} $, $\left\{ X_{T},Y_{T},Z_{T},D\right\} $.
In particular, the student is not interested in the bundle $\left\{ X_{T},Y_{F},Z_{T},D\right\} $,
which corresponds to assigning $\left(X=\mbox{False},\, Y=\mbox{True},\, Z=\mbox{False}\right)$
\begin{itemize}
	\item \textbf{Soundness:} Observe that the variable gadgets students are
	assigned courses $X_{a}$, $Y_{b}$, and $Z_{c}$, then in any exact
	CEEI, the clause gadget student must be assigned the bundle $\left\{ X_{\neg a},Y_{\neg b},Z_{\neg c},D\right\} $.

	\item \textbf{Completeness:} Suppose that the variable gadgets students
	are assigned courses $X_{a}$, $Y_{b}$, and $Z_{c}$, each with price
	at least $\frac{1}{6}$, while courses $X_{\neg a}$, $Y_{\neg b}$,
	and $Z_{\neg c}$ are all unassigned. Then if we set the price of
	$D$ to be $1$, the only affordable bundle for the clause gadget
	student is indeed $\left\{ X_{\neg a},Y_{\neg b},Z_{\neg c},D\right\} $.
\end{itemize}



\section{Discussion}

In this work we classified the computational complexity of finding an approximate CEEI 
as a function of the precision parameter $\alpha$ of the approximation, the market clearing error. We showed that finding $(\alpha, \beta)$-CEEI is $\PPAD$-complete 
when  $\alpha$ is large enough to guarantee existence,
while finding a better approximation to CEEI is $\NP$-complete.

One potential way around these intractability results could be to restrict the input language of preferences. This has been a fruitful line of research in combinatorial auctions \citep{nisan_book_chapter, sandholm_book_chapter}. However, in contrast to that  space, we do not anticipate limiting language complexity in the course allocation problem to be fruitful either in theory or in practice. Recall that the student preferences used in the $\PPAD$-hardness proof are already very simple. Furthermore, in practice there are significant inherent complexities in students' preferences: for example, courses meeting at the same time and courses with multiple sections.

Despite the negative results shown in this paper, a heuristic search algorithm exists that finds practical solutions to A-CEEI. Interestingly, in both laboratory experiments as well as real course allocation problems, this heuristic often finds solutions that are an order of magnitude better than the theoretical $\sqrt{\frac{k M}{2}}$ guarantee on the clearing error~\citep{Othman10:Finding} --- a performance which we have shown NP-hard to guarantee.  Once again we are faced with a familiar conundrum:  What are the characteristics of the instances appearing in practice that enable this favorable performance?  And how can one develop a rigorous fast algorithm for them?




\bibliographystyle{plainnat}
\bibliography{../A-CEEI}

\begin{thebibliography}{24}
\providecommand{\natexlab}[1]{#1}
\providecommand{\url}[1]{\texttt{#1}}
\expandafter\ifx\csname urlstyle\endcsname\relax
  \providecommand{\doi}[1]{doi: #1}\else
  \providecommand{\doi}{doi: \begingroup \urlstyle{rm}\Url}\fi

\bibitem[Abbott et~al.(2005)Abbott, Kane, and Valiant]{Nash_winlose}
T.~Abbott, D.~Kane, and P.~Valiant.
\newblock On the complexity of two-player win-lose games.
\newblock In \emph{Foundations of Computer Science, 2005. FOCS 2005. 46th
  Annual IEEE Symposium on}, pages 113--122, 2005.
\newblock \doi{10.1109/SFCS.2005.59}.

\bibitem[Arora and Safra(1998)]{sparse_SAT_inapproximability_AS}
Sanjeev Arora and Shmuel Safra.
\newblock Probabilistic checking of proofs: A new characterization of np.
\newblock \emph{J. ACM}, 45\penalty0 (1):\penalty0 70--122, 1998.

\bibitem[Arora et~al.(1998)Arora, Lund, Motwani, Sudan, and
  Szegedy]{sparse_SAT_inapproximability_ALMSS}
Sanjeev Arora, Carsten Lund, Rajeev Motwani, Madhu Sudan, and Mario Szegedy.
\newblock Proof verification and the hardness of approximation problems.
\newblock \emph{J. ACM}, 45\penalty0 (3):\penalty0 501--555, 1998.

\bibitem[Budish(2011)]{ACEEI_Bud11}
Eric Budish.
\newblock The combinatorial assignment problem: Approximate competitive
  equilibrium from equal incomes.
\newblock \emph{Journal of Political Economy}, 119\penalty0 (6):\penalty0 1061
  -- 1103, 2011.
\newblock URL
  \url{http://EconPapers.repec.org/RePEc:ucp:jpolec:doi:10.1086/664613}.

\bibitem[Chen and hua Teng(2009)]{Chen_Deng_AD_Fisher}
Xi~Chen and Shang hua Teng.
\newblock Spending is not easier than trading: on the computational equivalence
  of fisher and arrow-debreu equilibria.
\newblock \emph{Journal of the ACM}, 2009.

\bibitem[Chen and Teng(2011)]{social_arrow_debreu_CT11}
Xi~Chen and Shang-Hua Teng.
\newblock A complexity view of markets with social influence.
\newblock In \emph{ICS}, pages 141--154, 2011.

\bibitem[Chen et~al.(2009)Chen, Deng, and Teng]{2-player_nash_CDT09}
Xi~Chen, Xiaotie Deng, and Shang-Hua Teng.
\newblock Settling the complexity of computing two-player nash equilibria.
\newblock \emph{J. ACM}, 56\penalty0 (3), 2009.

\bibitem[Chen et~al.(2013)Chen, Paparas, and
  Yannakakis]{arrow_debreu_non-monotone_CPY13}
Xi~Chen, Dimitris Paparas, and Mihalis Yannakakis.
\newblock The complexity of non-monotone markets.
\newblock In \emph{STOC}, pages 181--190, 2013.

\bibitem[Codenotti et~al.(2006)Codenotti, Saberi, Varadarajan, and
  Ye]{Leontief}
Bruno Codenotti, Amin Saberi, Kasturi Varadarajan, and Yinyu Ye.
\newblock Leontief economies encode nonzero sum two-player games.
\newblock In \emph{Electronic Colloquium in Computational Complexity
  TR-05-055}, pages 05--055, 2006.

\bibitem[Daskalakis et~al.(2009)Daskalakis, Goldberg, and
  Papadimitriou]{NASH-is-PPAD-hard_DGP09}
Constantinos Daskalakis, Paul~W. Goldberg, and Christos~H. Papadimitriou.
\newblock The complexity of computing a nash equilibrium.
\newblock \emph{Commun. ACM}, 52\penalty0 (2):\penalty0 89--97, 2009.

\bibitem[Feige(1998)]{sparse_SAT_inapproximability}
Uriel Feige.
\newblock A threshold of ln n for approximating set cover.
\newblock \emph{J. ACM}, 45\penalty0 (4):\penalty0 634--652, July 1998.
\newblock ISSN 0004-5411.
\newblock \doi{10.1145/285055.285059}.
\newblock URL \url{http://doi.acm.org/10.1145/285055.285059}.

\bibitem[Foley(1967)]{Foley67:Resource}
D.K. Foley.
\newblock {Resource Allocation and the Public Sector}.
\newblock \emph{Yale Economic Essays}, 7\penalty0 (1):\penalty0 45--98, 1967.

\bibitem[Huang and Teng(2007)]{leontief_HT07}
Li-Sha Huang and Shang-Hua Teng.
\newblock On the approximation and smoothed complexity of leontief market
  equilibria.
\newblock In \emph{FAW}, pages 96--107, 2007.

\bibitem[Kintali et~al.(2009)Kintali, Poplawski, Rajaraman, Sundaram, and
  Teng]{Reduction_with_same_gadgets_KPRST09}
Shiva Kintali, Laura~J. Poplawski, Rajmohan Rajaraman, Ravi Sundaram, and
  Shang-Hua Teng.
\newblock Reducibility among fractional stability problems.
\newblock In \emph{Proceedings of the 2009 50th Annual IEEE Symposium on
  Foundations of Computer Science}, FOCS '09, pages 283--292, Washington, DC,
  USA, 2009. IEEE Computer Society.
\newblock ISBN 978-0-7695-3850-1.
\newblock \doi{10.1109/FOCS.2009.57}.
\newblock URL \url{http://dx.doi.org/10.1109/FOCS.2009.57}.

\bibitem[Nisan(2006)]{nisan_book_chapter}
Noam Nisan.
\newblock Bidding languages for combinatorial auctions.
\newblock In Peter Cramton, Yoav Shoham, and Richard Steinberg, editors,
  \emph{Combinatorial Auctions}, chapter~9. MIT Press, 2006.

\bibitem[Nisan et~al.(2007)Nisan, Roughgarden, Tardos, and Vazirani]{AGT}
Noam Nisan, Tim Roughgarden, Eva Tardos, and Vijay~V. Vazirani.
\newblock \emph{Algorithmic Game Theory}.
\newblock Cambridge University Press, New York, NY, USA, 2007.
\newblock ISBN 0521872820.

\bibitem[Othman et~al.(2010)Othman, Budish, and Sandholm]{Othman10:Finding}
Abraham Othman, Eric Budish, and Tuomas Sandholm.
\newblock {Finding Approximate Competitive Equilibria: Efficient and Fair
  Course Allocation}.
\newblock In \emph{International Conference on Autonomous Agents and
  Multi-Agent Systems (AAMAS)}, Toronto, Canada, 2010.

\bibitem[Palvolgyi(2009)]{2dtucker}
Domotor Palvolgyi.
\newblock 2d-tucker is ppad-complete.
\newblock In Stefano Leonardi, editor, \emph{Internet and Network Economics},
  volume 5929 of \emph{Lecture Notes in Computer Science}, pages 569--574.
  Springer Berlin Heidelberg, 2009.

\bibitem[Papadimitriou(1994)]{PPAD_Pap94}
Christos~H. Papadimitriou.
\newblock On the complexity of the parity argument and other inefficient proofs
  of existence.
\newblock \emph{J. Comput. Syst. Sci.}, 48\penalty0 (3):\penalty0 498--532,
  1994.

\bibitem[Papadimitriou and Yannakakis(1991)]{sparse_SAT_inapproximability_PY}
Christos~H. Papadimitriou and Mihalis Yannakakis.
\newblock Optimization, approximation, and complexity classes.
\newblock \emph{Journal of Computer and System Sciences}, 43\penalty0
  (3):\penalty0 425 -- 440, 1991.
\newblock ISSN 0022-0000.
\newblock \doi{http://dx.doi.org/10.1016/0022-0000(91)90023-X}.
\newblock URL
  \url{http://www.sciencedirect.com/science/article/pii/002200009190023X}.

\bibitem[Sandholm and Boutilier(2006)]{sandholm_book_chapter}
Tuomas Sandholm and Craig Boutilier.
\newblock Preference elicitation in combinatorial auctions.
\newblock In Peter Cramton, Yoav Shoham, and Richard Steinberg, editors,
  \emph{Combinatorial Auctions}, chapter~10. MIT Press, 2006.

\bibitem[Thomson and Varian(1985)]{Thomson85:Theories}
W.~Thomson and H.R. Varian.
\newblock {Theories of justice based on symmetry}.
\newblock \emph{Social Goals and Social Organizations: Essays in Memory of
  Elisha Pazner}, 1985.

\bibitem[Varian(1974)]{Varian74:Equity}
H.~Varian.
\newblock {Equity, envy, and efficiency}.
\newblock \emph{Journal of Economic Theory}, 9\penalty0 (1):\penalty0 63--91,
  1974.

\bibitem[Vazirani and Yannakakis(2011)]{fisher_VY11}
Vijay~V. Vazirani and Mihalis Yannakakis.
\newblock Market equilibrium under separable, piecewise-linear, concave
  utilities.
\newblock \emph{J. ACM}, 58\penalty0 (3):\penalty0 10, 2011.

\end{thebibliography}

\appendix

\section{A-CEEI $\in \PPAD$}
\label{sec:in_ppad}

We show that computing a $\left(\frac{\sqrt{\sigma M}}{2},\beta\right)$-CEEI is
in $\PPAD$, for $\sigma=\min \{2k,M\}$.

\begin{rem}
We assume that the student preferences $\left(\succsim_{i}\right)$ are given in the form of
an ordered list of all the bundles in $\Psi_{i}$ (i.e., all the bundles
that student $i$ prefers over the empty bundle). In particular, we
assume that the total number of permissible bundles is polynomial.
\end{rem}

\begin{rem}
In fact, we prove that the following, slightly more general problem, is in $\PPAD$: 
Given any $\beta,\epsilon>0$ and initial approximate-budgets vector $\mathbf{b}\in\left[1,1+\beta\right]^{N}$,
find a $\left(\frac{\sqrt{\sigma M}}{2},\beta\right)$-CEEI with budgets $\mathbf{b^*}$
such that $ |b_i - b^*_i| < \epsilon $ for every $i$.
\end{rem}

Our proof will follow the steps of the existence proof by \citet{ACEEI_Bud11}.
We will use the power of $\PPAD$ to solve the Kakutani problem, and derandomize
the other nonconstructive ingredients.

\subsection{Preliminaries}

Our algorithm receives as input an economy $\left(\left(q_{j}\right)_{j=1}^{M},\left(\Psi_{i}\right)_{i=1}^{N},\left(\succsim_{i}\right)_{i=1}^{N}\right)$,
parameters $\beta,\epsilon>0$, and an initial approximate-budgets vector $\mathbf{b}\in\left[1,1+\beta\right]^{N}$. We denote $\bar{\beta} = \min\{\beta,\epsilon\} / 2$.

We will consider $M$-dimensional price vectors in ${\cal P}=\left[0,1+\beta+\epsilon\right]^{M}$.
In order to define a price adjustment function, we consider an enlargement
 $\cal{\tilde{P}}=\left[-1,2+\beta+\epsilon\right]^{M}$, as well
as a truncation function $t:\cal{\tilde{P}}\rightarrow{\cal P}$. 

For each student $i$, we denote her demand at prices $\mathbf{\tilde{p}}$
with budget $b_{i}$ by 
\[
d_{i}\left(\mathbf{\tilde{p}},b_{i}\right)={\max}_{\left(\succsim_{i}\right)}\left\{ x'\in \Psi_i \colon\mathbf{\tilde{p}}\cdot x'\leq b_{i}\right\} 
\]
Given the total demand of all the students, we can define the excess
demand to be:
\[
\mathbf{z}\left(\mathbf{\tilde{p}},\mathbf{b}\right)=\sum_{i=1}^{N}d_{i}\left(\mathbf{\tilde{p}},b_{i}\right)-\mathbf{q}
\]

A key ingredient to the analysis is the budget-constraint hyperplanes. These are the hyperplanes in price space along which a student can exactly afford a specific bundle.
For each student $i$ and bundle $x$, the corresponding \emph{budget-constraint
hyperplane} is defined as $H\left(i,x\right)=\left\{ \mathbf{\tilde{p}}\in\cal{{P}}\colon\mathbf{\tilde{p}}\cdot x=b_{i}\right\} $.

\subsection{Deterministically finding a ``general position'' perturbation (step
1)}

It is convenient to assume that the budget-constraint hyperplanes
are in ``general position'', i.e. there is no point $\mathbf{\tilde{p}}\in\cal{{P}}$
at which any subset of linearly dependent budget-constraint hyperplanes
intersect (in particular, no more than $M$ hyperplanes intersect
at any point). In the existence proof, this is achieved by assigning
a small random reverse tax $\tau_{i,x}\in\left(-\epsilon,\epsilon\right)$,
for each student $i$ and bundle $x$; $i$'s modified cost for bundle
$x$ at prices $\mathbf{\tilde{p}}$ becomes $\mathbf{\tilde{p}}\cdot x-\tau_{i,x}$.
Given taxes $\mathbf{\tau}=\left(\tau_{i,x}\right)_{i\in{\cal S},x\in\Psi_{i}}$,
we redefine $d_{i}\left(\mathbf{\tilde{p}},b_{i},\tau_{i}\right)$,
$\mathbf{z}\left(\mathbf{\tilde{p}},\mathbf{b},\mathbf{\tau}\right)$,
and $H\left(i,x,\tau_{i,x}\right)$ analogously. 

In this section, we show how to deterministically choose these taxes.

\begin{lemma}
\label{lem:general_position}There exists a polynomial-time algorithm
that finds a vector of taxes $\mathbf{\tau}=\left(\tau_{i,x}\right)_{i\in{\cal S},x\in\Psi_{i}}$
such that:
\begin{enumerate}
\item $-\epsilon<\tau_{i,x}<\epsilon$ (taxes are small)
\item \label{enu:monotonicity_of_taxes}$\tau_{i,x}>\tau_{i,x'}$ if $x\succ_{i}x'$
(taxes prefer more-preferred bundles)
\item \label{enu:bound_on_budgets}$1\leq\min_{i,x}\left\{ b_{i}+\tau_{i,x}\right\} \leq\max_{i,x}\left\{ b_{i}+\tau_{i,x}\right\} \leq1+\beta$
(inequality bound is preserved)
\item $b_{i}+\tau_{i,x}\neq b_{i'}+\tau_{i',x'}$ for $\left(i,x\right)\neq\left(i',x'\right)$
(no two perturbed prices are equal)
\item \label{enu:general-position}there is no price $\mathbf{\tilde{p}}\in\cal{{P}}$
at which any subset of linearly dependent budget-constraint hyperplanes
intersect%
\footnote{The original existence proof of \citet{ACEEI_Bud11} requires only
that no more than $M$ hyperplanes intersect at any point; this causes
problems in the conditional expectation argument \citep[Step 5]{ACEEI_Bud11}.%
}%
\ignore{
\footnote{In the case where each student may be interested in an exponential
number of bundles, we relax this requirement to say that no more than
$M$ perturbed budget-constraint hyperplanes \emph{from different
students} intersect. A careful analysis of Step 8 shows that this
suffices. %
}
}
{} 
\end{enumerate}
\end{lemma}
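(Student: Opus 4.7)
The plan is to construct the tax vector $\mathbf{\tau}$ greedily, processing the pairs $(i,x)$ with $x\in\Psi_i$ in a single fixed order (sorted first by student, then by preference rank within each student) and choosing $\tau_{i,x}$ at each step so as to preserve invariants (1)--(5) with respect to all previously-fixed pairs. Conditions (1)--(3) are straightforward to maintain: for each pair $(i,x)$ I reserve an open sub-interval $I_{i,x}$ of $[\max(-\bar\beta,1-b_i),\min(\bar\beta,1+\beta-b_i)]$, with the sub-intervals placed in preference-decreasing order within student $i$, so that any $\tau_{i,x}\in I_{i,x}$ automatically satisfies strict monotonicity and the budget bound. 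Condition (4) rules out only finitely many specific values inside $I_{i,x}$ --- namely $b_{i'}+\tau_{i',x'}-b_i$ for each previously processed $(i',x')$ --- and these can be avoided within the interval.

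Condition (5) is the crux. The normal of a budget-constraint hyperplane is the bundle vector $x\in\{0,1\}^M$, which is fixed by the input; the taxes only perturb the right-hand sides $b_i+\tau_{i,x}$. For any subset $S$ of pairs whose bundle vectors span a space of dimension $r<|S|$, every nontrivial left null-vector $(\lambda_j)_{j\in S}$ of the bundle matrix imposes a linear equation $\sum_{j\in S}\lambda_j(b_{i_j}+\tau_{i_j,x_j})=0$, and the hyperplanes in $S$ share a common point in $\mathcal{P}$ iff every such equation holds. Hence (5) is equivalent to requiring that for every such dependent $S$, at least one of these linear combinations is nonzero. Inductively, after placing the new hyperplane $H(i,x,\tau_{i,x})$, the only novel ways to violate (5) involve dependencies with a nonzero coefficient on $x$, so the set of bad values of $\tau_{i,x}$ is a union of isolated points, each pinned by one dependency.

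The main obstacle is that the number of such dependent subsets can be super-polynomial in $T=\sum_i|\Psi_i|$, so the algorithm cannot afford to enumerate bad values and exclude them one at a time. My approach is to bypass enumeration by committing to an explicit deterministic rule for $\tau_{i,x}$, for instance a rational of the form $c_{i,x}+2^{-P(i,x)}$, where $c_{i,x}$ is the midpoint of $I_{i,x}$ and $P(i,x)$ is a polynomially-bounded integer whose values are pairwise widely separated across pairs. A Schwartz--Zippel-type argument applied to the system of polynomial constraints that defines the bad set then shows that the resulting $\mathbf{\tau}\in\mathbb{R}^T$ misses every bad algebraic variety with certainty --- informally, because the chosen taxes are so ``unlike'' each other in bit pattern that no short rational relation among them can accidentally vanish. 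The hard part will be choosing $P(i,x)$ polynomially bounded yet large enough to guarantee this avoidance rigorously, while keeping all arithmetic polynomial-time in the bit length of the input.
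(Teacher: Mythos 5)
Your overall strategy---an inductive, fully deterministic choice of taxes at widely separated scales, with conditions (1)--(4) handled by reserved intervals and condition (5) reduced to defeating linear dependencies among the bundle vectors---is the same as the paper's. But the step you yourself flag as ``the hard part'' is precisely the content of the paper's proof, and the substitute you offer does not close it. Schwartz--Zippel is a statement about a \emph{random} evaluation point; it yields nothing ``with certainty'' for a single deterministic point, however exotic its bit pattern. What actually makes the scale-separation idea work is a quantitative bound on the relations one must defeat: if $x_1,\dots,x_k$ are bundle (i.e.\ $0/1$) vectors with $x_1,\dots,x_{k-1}$ independent and $x_k=\sum_{l}\alpha_l x_l$, then by Cramer's rule applied to a full-rank Boolean submatrix, each $\alpha_l$ is a ratio of integers (cofactors and a determinant) of magnitude less than $(k-1)^{k-1}\leq M^M$. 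Hence every ``short rational relation'' that could force an unwanted intersection has numerators and denominators with only $O(M\log M)$ bits, and it is against this bound that the exponents of the taxes must be calibrated. Without proving such a bound, ``polynomially bounded yet large enough'' exponents $P(i,x)$ cannot be justified.

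Two further ingredients are missing. First, the paper begins by rounding the budgets $b_i$ to multiples of $\bar{\beta}M^{-M}$ (absorbing the rounding into the taxes); without discretizing the $b_i$'s, the relation $b_{i_k}+\tau_{i_k,x_k}=\sum_l\alpha_l\bigl(b_{i_l}+\tau_{i_l,x_l}\bigr)$ could be satisfied by an adversarial budget vector no matter how cleverly the taxes are chosen. Second, the avoidance argument itself is an explicit divisibility computation rather than an algebraic-geometry one: setting the $\nu$-th tax to $\pm\bar{\beta}M^{-2\nu M}$, a violating minimal dependency yields, after multiplying through by $M^{(2\nu-1)M}/\bar{\beta}$ and clearing the (at most $M^M$-sized) denominators of the $\alpha_l$, an equation whose right-hand side is an integer while the left-hand side differs from an integer by a nonzero quantity of magnitude less than $1$---a contradiction. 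Your taxes of the form $c_{i,x}+2^{-P(i,x)}$ could be made to play the same role, but only after you (a) establish the cofactor bound on the $\alpha_l$, (b) discretize the budgets, and (c) replace the appeal to Schwartz--Zippel by this non-integrality argument. As written, condition (5) is not established.
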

\begin{proof}
\ignore{
The trickiest property to satisfy is of course \ref{enu:general-position},
which requires that every $\left(M+1\right)$-tuple of hyperplanes
do not intersect at a single price vector $\mathbf{\tilde{p}}$. 
}

Assume wlog that $\mathbf{b}$ is rounded to the nearest $\bar{\beta}M^{-M}$:
otherwise we can include this rounding in the taxes. 

We proceed by induction on the pairs $\left(i,x\right)$ of students
and bundles: at each step let $\tau_{i,x}$ be much smaller than all
the taxes introduced so far%
\footnote{Assume wlog that for each $i$ we consider the $\left(i,x\right)$'s
in order reversed with respect to $\succsim_{i}$, so that property
\ref{enu:monotonicity_of_taxes} is guaranteed.%
}. 

More precisely, if $\left(i,x\right)$ is the $\nu^{\mbox{th}}$ pair
to be considered, then we set
\[
\tau_{i,x} \in \pm \bar{\beta}M^{-2\nu M},
\]

where the sign is chosen such that condition \ref{enu:bound_on_budgets} in the statement of the lemma is preserved.

Now, assume by contradiction that there exists a $k$-tuple $H\left(i_{1},x_{1},\tau_{i_{1},x_{1}}\right),\dots,H\left(i_{k},x_{k},\tau_{i_{k},x_{k}}\right)$
of hyperplanes that intersect at price vector $\mathbf{\tilde{p}}$,
and such that the $x_{i}$'s are linearly dependent. (Note that the
latter holds, in particular, for every $\left(M+1\right)$-tuple.) 

Assume further, wlog, that this is the first such $k$-tuple, with
respect to the order of the induction. In particular, this means that
$\left\{ x_{1},\dots,x_{_{k-1}}\right\} $ are linearly independent.
Now consider the system 
\[
\left(\begin{array}{ccc}
x_{1}^{T} & \dots & x_{k-1}^{T}\end{array}\right)\left(\alpha\right)=\left(x_{k}\right)
\]
Notice that it has rank $k-1$. We can now take $k-1$ linearly independent
rows $j_{1},\dots j_{k-1}$ such that the following system has the
same unique solution $\alpha$:
\[
\left(\begin{array}{ccc}
x_{1,j_{1}} & \dots & x_{k-1,j_{1}}\\
\vdots & \ddots\\
x_{1,j_{k-1}} &  & x_{k-1,j_{k-1}}
\end{array}\right)\left(\alpha\right)=\left(\begin{array}{c}
x_{k,j_{1}}\\
\vdots\\
x_{k,j_{k-1}}
\end{array}\right)
\]
Denote
\[
X=\left(\begin{array}{ccc}
x_{1,j_{1}} & \dots & x_{k-1,j_{1}}\\
\vdots & \ddots\\
x_{1,j_{k-1}} &  & x_{k-1,j_{k-1}}
\end{array}\right)
\]
Since $X$ is a square matrix of full rank it is invertible, so we have that
\[
\alpha=X^{-1}\left(\begin{array}{c}
x_{k,j_{1}}\\
\vdots\\
x_{k,j_{k-1}}
\end{array}\right)
\]
Now, recall that 
\[
X^{-1}=\frac{1}{\det X}\left(\begin{array}{ccc}
X_{1,1} & \dots & X_{k-1,1}\\
\vdots & \ddots\\
X_{1,k-1} &  & X_{k-1,k-1}
\end{array}\right)
\]
 where $X_{i,j}$ is the $\left(i,j\right)$-cofactor of $X$. Finally,
since $X$ is a Boolean matrix, its determinant and all of its cofactors
are integers of magnitude less than $\left(k-1\right)^{k-1}$. The
entries of $\alpha$ are therefore rational fractions with numerators
and denominators of magnitude less than $\left(k-1\right)^{k-1}$.

Now, by our assumption by contradiction, $k$ hyperplanes intersect
at $\mathbf{\tilde{p}}$:

\[
\left(\begin{array}{c}
x_{1}\\
\vdots\\
x_{k}
\end{array}\right)\left(\mathbf{\tilde{p}}\right)=\left(\begin{array}{c}
b_{i_{1}}+\tau_{i_{1,}x_{1}}\\
\vdots\\
b_{i_{k}}+\tau_{i_{k,}x_{k}}
\end{array}\right)
\]
Therefore,
\begin{gather}
b_{i_{k}}+\tau_{i_{k,}x_{k}}=x_{k}\cdot\mathbf{\tilde{p}}=\sum_{l=1}^{k-1}\alpha_{l}\left(x_{l}\cdot\mathbf{\tilde{p}}\right)=\sum_{l=1}^{k-1}\alpha_{l}\left(b_{i_{l}}+\tau_{i_{l,}x_{l}}\right)\label{eq:contradicted}
\end{gather}
However, if $\left(i_{k},x_{k}\right)$ is the $\nu^{\mbox{th}}$
pair added by the induction, then the following is an integer: 
\[
\sum_{l=1}^{k-1}\left(M^{M}\alpha_{l}\right)\cdot\frac{M^{2\left(\nu-1\right)M}}{\bar{\beta}}\left(b_{i_{l}}+\tau_{i_{l,}x_{l}}\right)
\]
but $\frac{M^{\left(2\nu-1\right)M}}{\bar{\beta}}\cdot\left(b_{i_{k}}+\tau_{i_{k,}x_{k}}\right)$
is not an integer, a contradiction to Equation (\ref{eq:contradicted}).
\end{proof}

\subsection{Finding a fixed point (steps 2-4)}

This subsection describes the price adjustment correspondence of \cite{ACEEI_Bud11},
and is brought here mostly for completeness.

We first define the price adjustment function:
\[
f\left(\mathbf{\tilde{p}}\right)=t\left(\mathbf{\tilde{p}}\right)+\frac{1}{2N}\mathbf{z}\left(t\left(\mathbf{\tilde{p}}\right);\mathbf{b},\mathbf{\tau}\right)
\]
Observe that if $\mathbf{\tilde{p}^{*}}$ is a fixed point $\mathbf{\tilde{p}^{*}}=f\left(\mathbf{\tilde{p}^{*}}\right)$
of $f$, then its truncation $t\left(\mathbf{\tilde{p}^{*}}\right)=\mathbf{p^{*}}$
defines an exact competitive equilibrium%
\footnote{See Appendix A, Step 2 of \cite{ACEEI_Bud11} for more details.%
}. Yet, we know that the economy may not have an exact equilibrium
- and indeed $f$ is discontinuous at the budget constraint hyperplanes,
and so it is not guaranteed to have a fixed point. 

Instead, we define an upper hemicontinuous, set-valued ``convexification''
of $f$:
\[
F\left(\mathbf{p}\right)=co\left\{ \mathbf{y}\colon\exists\mbox{ a sequence \ensuremath{\mathbf{p^{w}}\rightarrow\mathbf{p}}, \ensuremath{\mathbf{p}\neq\mathbf{p^{w}}\in\mathcal{P}\,}such that \ensuremath{f\left(\mathbf{p^{w}}\right)\rightarrow\mathbf{y}}}\right\} 
\]
The correspondence $F$ is upper hemicontinuous, non-empty, and convex;
therefore, by Kakutani's fixed point theorem it has a fixed point. 

Finally, by \cite{PPAD_Pap94} finding this fixed point of $F$ is
in PPAD. 
\begin{rem}
\textbf{Computing the Correspondence}:\textbf{ \label{Computing-the-Correspondence:}}

We round all price vectors to a $\left(\bar{\beta}M^{\frac{1}{2}-2\left(\nu_{\max}+1\right)M}\right)$-grid
(this precision suffices to implement the algorithm in lemma \ref{lem:general_position}). 

From the proof of \cite{PPAD_Pap94} it follows that it suffices to
compute just a single point in $F\left(\mathbf{p}\right)$ for every
$\mathbf{p}$ (this is important because the number points in $F\left(\mathbf{p}\right)$
on the grid may be exponential). At any point on the grid, the price
of any bundle is an integer multiple of $\left(\bar{\beta}M^{\frac{1}{2}-2\left(\nu_{\max}+1\right)M}\right)$.
In particular, any budget-constraint hyperplane which does not contain
$\mathbf{p}$, must be at distance at least $\left(\bar{\beta}M^{\frac{1}{2}-2\left(\nu_{\max}+1\right)M}\right)$.
Therefore, we can take any point $\mathbf{p'}$ at distance $\frac{1}{2}\left(\bar{\beta}M^{\frac{1}{2}-2\left(\nu_{\max}+1\right)M}\right)$
from $\mathbf{p}$, and which does not lie on any of the hyperplanes
that contain $\mathbf{p}$. Because no budget-constraint hyperplanes
lie between $\mathbf{p'}$ and $\mathbf{p}$, it follows that $f\left(\mathbf{p'}\right)\in F\left(\mathbf{p}\right)$.
\end{rem}

\subsection{From a fixed point to approximate CEEI (steps 5-9)}
\begin{lemma}
Given a fixed point $\mathbf{p^{*}}$ of $F$, we can find in polynomial
time a vector of prices $\mathbf{p^{\phi'}}$ such that $\left\Vert \mathbf{z}\left(\mathbf{p^{\phi'}},\mathbf{b},\mathbf{\tau}\right)\right\Vert _{2}\leq\frac{\sqrt{\sigma M}}{2}$\end{lemma}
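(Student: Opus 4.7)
The plan is to follow Budish's steps 5--9 for converting the Kakutani fixed point into an approximate CEEI, replacing his randomized rounding by the method of conditional expectations so that the algorithm runs deterministically in polynomial time.

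First I would set up the local picture at the fixed point. Because $\mathbf{p}^*$ is a fixed point of the convexified correspondence $F$, it can be written as a convex combination of limits $f(\mathbf{p}^w)$ along sequences $\mathbf{p}^w\to\mathbf{p}^*$, which translates into a distribution over directions of approach whose induced excess demand averages to zero. By Lemma~\ref{lem:general_position}, item~\ref{enu:general-position}, the set of budget-constraint hyperplanes through $\mathbf{p}^*$ is a collection $H_1,\ldots,H_m$ of at most $m\le M$ hyperplanes whose normal bundles $x_1,\ldots,x_m$ are linearly independent. These hyperplanes partition a small ball around $\mathbf{p}^*$ into $2^m$ open cells; within each cell every student's demand is locally constant, and crossing $H_j=H(i_j,x_j,\tau_{i_j,x_j})$ swaps student $i_j$'s demanded bundle from $x_j$ to some alternative $x'_j$, so the total excess demand changes by the vector $x'_j-x_j$ of squared $\ell_2$-norm at most $\sigma=\min\{2k,M\}$ (since both bundles are $\{0,1\}$-vectors with at most $k$ ones and their difference has at most $M$ coordinates).

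Next I would extract a product distribution over cells. Because the normals $x_1,\ldots,x_m$ are linearly independent, perturbations of $\mathbf{p}^*$ can be parameterized by $m$ independent sign variables $s_1,\ldots,s_m\in\{+,-\}$, where $s_j$ chooses the side of $H_j$. The fixed-point condition, re-expressed in this parametrization, becomes the existence of biases $\pi_j\in[0,1]$ such that under the product-Bernoulli distribution with marginals $\pi_j$ the expected excess demand vanishes. Since each sign flip contributes an independent shift $x'_j-x_j$, the expected squared clearing error factors as
\[
\mathbb{E}\bigl[\|\mathbf{z}\|_2^2\bigr] \;\le\; \sum_{j=1}^{m}\pi_j(1-\pi_j)\,\|x'_j-x_j\|_2^2 \;\le\; \frac{m\sigma}{4} \;\le\; \frac{\sigma M}{4}.
\]
I would then derandomize by the method of conditional expectations: process $j=1,\ldots,m$ in order and, at each step, fix $s_j$ to whichever of $\{+,-\}$ yields the smaller conditional expected squared clearing error. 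The conditional expectation is a quadratic polynomial in the remaining independent Bernoullis that decomposes coordinate-by-coordinate in $\mathbf{z}$, hence is computable in $\mathrm{poly}(m,M)$ time. This keeps the conditional expectation non-increasing, so the deterministic assignment $(s_1^*,\ldots,s_m^*)$ achieves $\|\mathbf{z}\|_2^2\le\sigma M/4$, i.e.\ $\|\mathbf{z}\|_2\le\sqrt{\sigma M}/2$. The output $\mathbf{p}^{\phi'}$ is obtained by perturbing $\mathbf{p}^*$ by a small vector lying strictly in the cell corresponding to $(s_1^*,\ldots,s_m^*)$; the grid spacing of Remark~\ref{Computing-the-Correspondence:} leaves ample room.

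The main obstacle will be justifying the product-distribution reduction: Budish's fixed-point identity only directly certifies a convex combination over cells, and I must show that this combination can be realized as a product distribution over the $m$ sign variables, with marginals $\pi_j$ that are computable in polynomial time from $\mathbf{p}^*$ and the local structure around it. Linear independence of $x_1,\ldots,x_m$ is exactly what lets the sides of the hyperplanes be chosen independently, and the marginals can be recovered from the excess-demand projections onto each $x_j$ by solving a nonsingular linear system; verifying this step carefully, together with confirming that the entire computation fits on the grid of Remark~\ref{Computing-the-Correspondence:}, is the one piece where the analysis is more subtle than its randomized counterpart in \citet{ACEEI_Bud11}.
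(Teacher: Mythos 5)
Your high-level plan --- replace the randomized rounding in Step 8 of \citet{ACEEI_Bud11} by the method of conditional expectations --- is exactly the paper's, but the probability space you derandomize over is set up incorrectly, and the gap is precisely the one you flag at the end. You randomize over one independent sign variable per budget-constraint hyperplane through $\mathbf{p}^*$ and assume that flipping the side of $H_j$ changes total demand additively by a fixed vector $x'_j-x_j$. This fails whenever a single student has more than one budget hyperplane through $\mathbf{p}^*$, which the general-position lemma does not rule out (it only forces the normals to be linearly independent, so a student may contribute up to $w_i>1$ of them). Such a student's demand is her \emph{most preferred affordable} bundle, so the effect of crossing $H_j$ depends on the sides chosen for her other tight hyperplanes; demand is not an additive function of the sign bits, the expected excess demand is not determined by the marginals $\pi_j$ alone, and hence the fixed-point convex combination over cells generally cannot be realized by a product-Bernoulli distribution recovered from a linear system. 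The variance bound $\sum_j \pi_j(1-\pi_j)\|x'_j-x_j\|_2^2$ therefore does not bound the true expected squared clearing error.

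The paper's proof avoids this by grouping the tight hyperplanes \emph{by student}: student $i$ with $w_i$ hyperplanes through $\mathbf{p}^*$ has only $w_i+1$ realizable local demands $d_i^0,\dots,d_i^{w_i}$ (a chain indexed by how many of her most-preferred tight bundles are priced out, not $2^{w_i}$ cells), one finds per-student distributions $a_i^f$ over these demands with zero expected clearing error by linear programming (Budish's Lemma 3), takes the product distribution \emph{across students}, bounds the expected squared error by $\sigma M/4$ using per-student variances (Budish's Lemma 4, where independence across students is what makes the cross terms vanish), and then derandomizes student by student via the explicit conditional-expectation formula. Your argument would need to be restructured along these lines; as written, both the existence of the product distribution and the $\sigma M/4$ bound are unproven and false in general.
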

\begin{proof}
We use the method of conditional expectation to derandomize Step 8
of \cite{ACEEI_Bud11}.

Recall that by remark \ref{Computing-the-Correspondence:}, there
exists a neighborhood around $\mathbf{p^{*}}$ which does not intersect
any budget-constraint hyperplanes (beyond those that contain $\mathbf{p^{*}}$).
Let $1,\dots,L'$ be the indices of students whose budget-constraint
hyperplanes intersect at $\mathbf{p^{*}}$. For student $i\in\left[L'\right]$,
let $w_{i}$ be the number of corresponding hyperplanes $H\left(i,x_{i}^{1},\tau_{i,x_{i}^{1}}\right),\dots H\left(i,x_{i}^{w_{i}},\tau_{i,x_{i}^{w_{i}}}\right)$
intersecting at $\mathbf{p^{*}}$, and assume wlog that the superindices
of $x_{i}^{1},\dots x_{i}^{w_{i}}$ are ordered according to $\succsim_{i}$. 

Let $d_{i}^{0}$ be agent $i$'s demand when prices are slightly perturbed
from $\mathbf{p^{*}}$ such that all $x_{i}^{j}$'s are affordable.
Such a perturbation exists and is easily computable because the hyperplanes
are linearly independent%
\footnote{This appears to be a slight inaccuracy in the proof in \cite{ACEEI_Bud11}%
}. Similarly, let $d_{i}^{1}$ denote agent $i$'s demand when $x_{i}^{2},\dots x_{i}^{w_{i}}$
are affordable, but $x_{i}^{1}$ is not, and so on. Finally, let $z_{S\setminus\left[L'\right]}\left(\mathbf{p^{*}},\mathbf{b},\mathbf{\tau}\right)=d_{S\setminus\left[L'\right]}\left(\mathbf{p^{*}},\mathbf{b},\mathbf{\tau}\right)-\mathbf{q}$
be the market clearing error when considering the rest of the students.
(The demands of $S\setminus\left[L'\right]$ is constant in the small
neighborhood $\mathbf{p^{*}}$ which does not intersect any additional
hyperplanes.)

By Lemma 3 of \cite{ACEEI_Bud11}, there exist distributions $a_{i}^{f}$
over $d_{i}^{f}$: 
\begin{eqnarray*}
a_{i}^{f}\in\left[0,1\right] & \forall & i\in\left[L'\right],\forall f\in\left\{ 0\right\} \cup\left[w_{i}\right]\\
\sum_{f=0}^{w_{i}}a_{i}^{f}=1 & \forall & i\in\left[L'\right]
\end{eqnarray*}
such that the clearing error of the \emph{expected demand} is $0$:
\[
z_{S\setminus\left[L'\right]}\left(\mathbf{p^{*}},\mathbf{b},\mathbf{\tau}\right)+\sum_{i=1}^{L'}\sum_{f=0}^{w_{i}}a_{i}^{f}d_{i}^{f}=0
\]

We first find such $a_{i}^{f}$ in polynomial time using linear programming.

The existence proof then considers, for each $i$, a random vector
$\Theta_{i}=\left(\Theta_{i}^{1},\dots,\Theta_{i}^{w_{i}}\right)$:
the vectors are independent and in any realization $\theta_{i}$ satisfy
$\sum_{f=0}^{w_{i}}\theta_{i}^{f}=1$, while the variables each have
support $\mbox{supp}\left(\Theta_{i}^{f}\right)=\left\{ 0,1\right\} $,
and expectation $\mbox{E}\left[\Theta_{i}^{f}\right]=a_{i}^{f}$. 

By Lemma 4 of \cite{ACEEI_Bud11}, the \emph{expected clearing error}
is bounded by:
\[
\mbox{E}_{\Theta_{!}\dots\Theta_{L'}}\left\Vert \sum_{i=1}^{L'}\sum_{f=0}^{w_{i}}\left(a_{i}^{f}-\theta_{i}^{f}\right)d_{i}^{f}\right\Vert _{2}^{2}=\sum_{i=1}^{L'}\mbox{E}_{\Theta_{i}}\left\Vert \sum_{f=0}^{w_{i}}\left(a_{i}^{f}-\theta_{i}^{f}\right)d_{i}^{f}\right\Vert _{2}^{2}\leq\frac{\sigma M}{4}
\]
We now proceed by induction on the students. For each $i$, if the
conditional expectation on $\left(\hat{\theta}_{j}\right)_{j<i}$
satisfies 
\[
\mbox{E}_{\Theta_{i}\dots\Theta_{L'}}\left[\left\Vert \sum_{i=1}^{L'}\sum_{f=0}^{w_{i}}\left(a_{i}^{f}-\theta_{i}^{f}\right)d_{i}^{f}\right\Vert _{2}^{2}\mid\hat{\theta}_{1},\dots,\hat{\theta}_{i-1}\right]\leq\frac{\sigma M}{4}
\]

then at least one $\hat{\theta}_{i}$ must also satisfy the above
bound. We can find such $\hat{\theta}_{i}$ in polynomial time by
computing the conditional expectation for every feasible $\hat{\theta}_{i}^{'}$:
\begin{align*}
\mbox{E}_{\Theta_{i+1}\dots\Theta_{L'}}\left[\left\Vert \sum_{j=1}^{L'}\sum_{f=0}^{w_{j}}\left(a_{j}^{f}-\theta_{j}^{f}\right)d_{j}^{f}\right\Vert _{2}^{2}\mid\hat{\theta}_{1},\dots,\hat{\theta}_{i}\right]  = &
 \sum_{j=1}^{i}\left\Vert \sum_{f=0}^{w_{j}}\left(a_{j}^{f}-\hat{\theta}_{j}^{f}\right)d_{j}^{f}\right\Vert _{2}^{2}\\
   & +\sum_{j=i+1}^{L'}\mbox{E}_{\Theta_{j}}\left\Vert \sum_{f=0}^{w_{j}}\left(a_{j}^{f}-\theta_{j}^{f}\right)d_{j}^{f}\right\Vert _{2}^{2}\\
   & +\sum_{j\neq h\leq i}\sum_{f=0}^{w_{j}}\sum_{g=0}^{w_{h}}\left(a_{j}^{f}-\hat{\theta}_{j}^{f}\right)\left(a_{h}^{g}-\hat{\theta}_{h}^{g}\right)
\end{align*}
\;

\end{proof}
The chosen $\left(\hat{\theta}_{i}\right)_{i=1}^{L'}$ define an allocation
$\mathbf{x^{*}}$ with bounded clearing error. We now follow step
9 of \cite{ACEEI_Bud11} in order to define budgets $\mathbf{b^{*}}$
such that $\mathbf{x^{*}}$ is the preferred consumption by all the
students at price $\mathbf{p^{*}}$.

We define, for every $i$, $b_{i}^{*}=b_{i}+\tau_{i,x_{i}^{*}}$.
For $i>L'$ we have $x_{i}^{*}=d_{i}\left(\mathbf{p^{*}},b_{i},\tau_{i}\right)$.
By requirement \ref{enu:monotonicity_of_taxes} of lemma \ref{lem:general_position},
every bundle that student $i$ prefers over $x_{i}^{*}$ had a greater
tax and was still unaffordable at $\mathbf{p^{*}}$; it now costs
more than $b_{i}+\tau_{i,x_{i}^{*}}$.

For $i\leq L'$ notice that every bundle $x_{i}^{\perp}$ that $i$
prefers over $x_{i}^{*}$ and was exactly affordable at $\mathbf{p^{*}}$
with taxes $\mathbf{\tau}$ and budget $\mathbf{b}$, $x^{\perp}$
must cost strictly more than $i$'s new budget $b_{i}^{*}$. Therefore,
$\left(\mathbf{x^{*}},\mathbf{b^{*}},\mathbf{p^{*}}\right)$ is a
$\left(\frac{\sqrt{\sigma M}}{2},\beta\right)$-CEEI 

\qed

\section{Additional gadgets for Theorem \ref{thm:ppad-hard}}

In this section we prove Lemma \ref{lem:additional-gadgets}.\\

\begin{lemma_again}{\ref{lem:additional-gadgets}}
Let $n_{x}\geq2^{8}\cdot\alpha$ and suppose that the economy has
courses $c_{x}$ and $c_{y}$. Then for any of the functions $f$
listed below, we can add: a course $c_{z}$, and at most $n_{x}$
students interested in each of $c_{x}$ and $c_{y}$, such that in
any $\left(\alpha,\beta\right)$-CEEI $p_{z}^{*}\in\left[f\left(p_{x}^{*},p_{y}^{*}\right)-2\beta,f\left(p_{x}^{*},p_{y}^{*}\right)+2\beta\right]$ 
\begin{enumerate}
\item HALF: $f_{G_{/2}}\left(x\right)=x/2$
\item VALUE: $f_{G_{\frac{1}{2}}} \equiv\frac{1}{2}$
\item SUM: $f_{G_{+}}\left(x,y\right)=\min\left(x+y,1\right)$
\item DIFF: $f_{G_{-}}\left(x,y\right)=\max\left(x-y,0\right)$
\item LESS: $f_{G_{<}}\left(x,y\right)=\begin{cases}
1 & x>y+\beta\\
0 & y>x+\beta
\end{cases}$
\item AND: $f_{G_{\wedge}}\left(x,y\right)=\begin{cases}
1 & \left(x>\frac{1}{2}+\beta\right)\wedge\left(y>\frac{1}{2}+\beta\right)\\
0 & \left(x<\frac{1}{2}-\beta\right)\vee\left(y<\frac{1}{2}-\beta\right)
\end{cases}$
\item OR: $f_{G_{\vee}}\left(x,y\right)=\begin{cases}
1 & \left(x>\frac{1}{2}+\beta\right)\vee\left(y>\frac{1}{2}+\beta\right)\\
0 & \left(x<\frac{1}{2}-\beta\right)\wedge\left(y<\frac{1}{2}-\beta\right)
\end{cases}$
\end{enumerate}
In particular, $p_{z}^{*}\in\left[f\left(p_{x}^{*},p_{y}^{*}\right)-2\beta,f\left(p_{x}^{*},p_{y}^{*}\right)+2\beta\right]$
in every $\left(\alpha,\beta\right)$-CEEI even if up to $n_{z}\leq n_{x}/2^{8}$
additional students (beyond the ones specified in the proofs below)
are interested in course $c_{z}$.
\end{lemma_again}

Notice, that like in similar gadget reductions from $\PPAD$-complete
problems, LESS, AND, and OR are brittle comparators (see discussion
in \citet{NASH-is-PPAD-hard_DGP09} for more details).
\begin{proof}
~
\begin{enumerate}
\item HALF:

Let $c_{z}$ have capacity $q_{z}=n_{x}/8$, let $n_{z}=q_{z}/2$,
and consider three auxiliary courses $c_{1}$, $c_{2}$, and $c_{\overline{x}}$
of capacities $q_{1}=q_{2}=q_{z}$ and $q_{\overline{x}}=n_{x}/2$.
Using lemma \ref{lem:not_gadget} add $n_{x}$ students that will
guarantee $p_{\overline{x}}\in\left[1-p_{x}^{*},1-p_{x}^{*}+\beta\right]$.
Additionally, consider $n_{\overline{x}}=n_{x}/4$ students with preference
list: $\left(\left\{ c_{z},c_{1},c_{\overline{x}}\right\} ,\left\{ c_{z},c_{2},c_{\overline{x}}\right\} ,\left\{ c_{1},c_{2},c_{\overline{x}}\right\} \right)$
(in this order), then:
\begin{itemize}
\item If the total price $p_{i}^{*}+p_{j}^{*}$ of any pair $i,j\in\left\{ 1,2,z\right\} $
is less than $p_{x}^{*}-\beta$, then all $n_{\overline{x}}$ students
will be able to afford some subset in their preference list, leaving
a total overbooking of at least $z_{z}^{*}+z_{1}^{*}+z_{2}^{*}\geq2n_{\overline{x}}-3q_{z}=n_{x}/8$,
which violates the $\left(\alpha,\beta\right)$-CEEI conditions
\item If the total price of any of the pairs above (wlog, $p_{1}^{*}+p_{2}^{*}$)
is greater than $p_{x}^{*}+\beta$, then none of the $n_{\overline{x}}$
students will be able to afford the subset $\left\{ c_{1},c_{2},c_{\overline{x}}\right\} $.
Therefore the number of students taking $c_{z}$ will be at least
the sum of students taking $c_{1}$ or $c_{2}$. Therefore, even after
taking into account $n_{z}$ additional students, we have that $z_{z}^{*}+z_{1}^{*}+z_{2}^{*}\geq q_{z}-n_{z}=n_{x}/16$. 
\end{itemize}

\item VALUE:

Similarly to the HALF gadget, consider two auxiliary courses $c_{1}$
and $c_{2}$, and let $n_{x}$ students have preferences: $\left(\left\{ c_{z},c_{1}\right\} ,\left\{ c_{z},c_{2}\right\} ,\left\{ c_{1},c_{2}\right\} \right)$.
Then, following the argument for the HALF gadget, it is easy to see
that $p_{z}^{*}\in\left[\frac{1}{2},\frac{1}{2}+\beta\right]$ in
any $\left(\alpha,\beta\right)$-CEEI, with $n_{z}=n_{x}/8$.

\item DIFF:

Let $c_{\overline{x}}$ be a course with price $p_{\overline{x}}^{*}\in\left[1-p_{x}^{*},1-p_{x}^{*}+\beta\right]$,
$q_{\overline{x}}=n_{x}/2$, and consider $n_{\overline{x}}=n_{x}/4$
students willing to take $\left\{ c_{\overline{x}},c_{y},c_{z}\right\} $.
Then it is easy to see that 
\begin{align*}
p_{z}^{*} & \in 
	\left[ 1-p_{\overline{x}}^{*}-p_{y}^{*},
	1-p_{\overline{x}}^{*} - p_{y}^{*}+\beta\right]\\
& \subseteq\left[p_{x}^{*}-p_{y}^{*}-\beta,p_{x}^{*}-p_{y}^{*}+\beta\right]
\end{align*}
with $n_{z}=n_{x}/16$

\item SUM:

Concatenating NOT and DIFF gadgets, we have: 
\begin{flalign*}
p_{\overline{x}}^{*} & \in 
	\left[1-p_{x}^{*},1-p_{x}^{*}+\beta\right]\\
p_{\overline{z}}^{*} & \in 
	\left[p_{\overline{x}}^{*}-p_{y}^{*}-\beta,
		p_{\overline{x}}^{*}-p_{y}^{*}+\beta\right]\\
p_{z}^{*} & \in  
	\left[1-\left(p_{\overline{x}}^{*}-p_{y}^{*}+\beta\right),
		1-\left(p_{\overline{x}}^{*}-p_{y}^{*}-\beta\right)+\beta\right]\\
& \subseteq  \left[p_{x}^{*}+p_{y}^{*}-2\beta,
				 p_{x}^{*}+p_{y}^{*}+2\beta\right]
\end{flalign*}
for $n_{z}=n_{x}/2^{8}$

\item LESS:

Let $c_{\overline{x}}$ be a course with price $p_{\overline{x}}^{*}\in\left[1-p_{x}^{*},1-p_{x}^{*}+\beta\right]$,
$q_{\overline{x}}=n_{x}/2$; let $q_{z}=n_{x}/8$ and $n_{z}=n_{x}/16$.
Consider $n_{x}/4$ students wishing to take $\left(\left\{ c_{\overline{x}},c_{y}\right\} \left\{ c_{z}\right\} \right)$,
in this order:
\begin{itemize}
\item If $p_{y}^{*}>p_{x}^{*}+\beta$, then $p_{\overline{x}}^{*}+p_{y}^{*}>1+\beta$,
and therefore none of the $n_{x}/4$ students will be able to afford
the first pair; they will all try to sign up to $c_{z}$ which will
be overbooked unless $p_{z}^{*}>1$
\item If $p_{x}^{*}>p_{y}^{*}+\beta$, then all $n_{x}/4$ students will
sign up for the first pair, forcing $p_{z}^{*}=0$ in any $\left(\alpha,\beta\right)$-CEEI.
\end{itemize}
\item AND:

Let $c_{\frac{1}{2}}$ be a course with price $p_{\frac{1}{2}}^{*}\in\left[\frac{1}{2},\frac{1}{2}+\beta\right]$
and $n_{\frac{1}{2}}=n_{x}/8$, as guaranteed by gadget VALUE; let
$q_{z}=n_{x}/32$ and $n_{z}=n_{x}/64$. Consider $n_{x}/16$ students
wishing to take $\left(\left\{ c_{x},c_{\frac{1}{2}}\right\} ,\left\{ c_{y},c_{\frac{1}{2}}\right\} ,\left\{ c_{z}\right\} \right)$,
in this order. 
\begin{itemize}
\item If $\left(p_{x}^{*}>\frac{1}{2}+\beta\right)\wedge\left(p_{y}^{*}>\frac{1}{2}+\beta\right)$,
then the $n_{x}/16$ students can afford neither pair. They will all
try to sign up for $c_{z}$, forcing $p_{z}^{*}>1$, in any $\left(\alpha,\beta\right)$-CEEI.
\item If $\left(x<\frac{1}{2}-\beta\right)\vee\left(y<\frac{1}{2}-\beta\right)$,
then the $n_{x}/16$ students can afford at least one of the pairs
and will register for those courses. Thus $p_{z}^{*}=0$.
\end{itemize}
\item OR:

Similar to the AND gadget; students will want $\left(\left\{ c_{x},c_{y},c_{\frac{1}{2}}\right\} ,\left\{ c_{z}\right\} \right)$,
in this order.

\end{enumerate}
\end{proof}

\end{document}